\newtheorem{theorem}{Theorem}
\theoremstyle{definition}
\newtheorem{assumption}{Assumption}
\newtheorem{lemma}{Lemma}
\theoremstyle{remark}
\newtheorem{remark}{Remark}
\theoremstyle{definition}
\begin{document}
%
% paper title
% Titles are generally capitalized except for words such as a, an, and, as,
% at, but, by, for, in, nor, of, on, or, the, to and up, which are usually
% not capitalized unless they are the first or last word of the title.
% Linebreaks \\ can be used within to get better formatting as desired.
% Do not put math or special symbols in the title.
\title{Adaptive Observer for a Class of Systems with Switched Unknown Parameters Using DREM}
%
%
% author names and IEEE memberships
% note positions of commas and nonbreaking spaces ( ~ ) LaTeX will not break
% a structure at a ~ so this keeps an author's name from being broken across
% two lines.
% use \thanks{} to gain access to the first footnote area
% a separate \thanks must be used for each paragraph as LaTeX2e's \thanks
% was not built to handle multiple paragraphs
%

\author{Tong~Liu, Zengjie~Zhang,~\IEEEmembership{Member, IEEE},~ Fangzhou~Liu$^*$,~\IEEEmembership{Member, IEEE},~
        and~Martin~Buss,~\IEEEmembership{Fellow,~IEEE}% <-this % stops a space
\thanks{
* Corresponding author.

Tong Liu, Fangzhou Liu, Martin Buss are with the Chair of Automatic Control Engineering,
        Technische Universit{\"a}t M{\"u}nchen, Munich, 80333, Germany
        fangzhou.liu@tum.de}% <-this % stops a space
\thanks{Zengjie Zhang is with with the School of Engineering, the University of British Columbia, 1137 Alumni Avenue, Kelowna, V1V 1V7, BC, Canada}}% <-this % stops a space
%\thanks{Manuscript received April 19, 2005; revised August 26, 2015.}}

% note the % following the last \IEEEmembership and also \thanks - 
% these prevent an unwanted space from occurring between the last author name
% and the end of the author line. i.e., if you had this:
% 
% \author{....lastname \thanks{...} \thanks{...} }
%                     ^------------^------------^----Do not want these spaces!
%
% a space would be appended to the last name and could cause every name on that
% line to be shifted left slightly. This is one of those "LaTeX things". For
% instance, "\textbf{A} \textbf{B}" will typeset as "A B" not "AB". To get
% "AB" then you have to do: "\textbf{A}\textbf{B}"
% \thanks is no different in this regard, so shield the last } of each \thanks
% that ends a line with a % and do not let a space in before the next \thanks.
% Spaces after \IEEEmembership other than the last one are OK (and needed) as
% you are supposed to have spaces between the names. For what it is worth,
% this is a minor point as most people would not even notice if the said evil
% space somehow managed to creep in.

% The paper headers
\markboth{Journal of \LaTeX\ Class Files,~Vol.~14, No.~8, August~2015}%
{Shell \MakeLowercase{\textit{et al.}}: Bare Demo of IEEEtran.cls for IEEE Journals}
% The only time the second header will appear is for the odd numbered pages
% after the title page when using the twoside option.
% 
% *** Note that you probably will NOT want to include the author's ***
% *** name in the headers of peer review papers.                   ***
% You can use \ifCLASSOPTIONpeerreview for conditional compilation here if
% you desire.

% If you want to put a publisher's ID mark on the page you can do it like
% this:
%\IEEEpubid{0000--0000/00\$00.00~\copyright~2015 IEEE}
% Remember, if you use this you must call \IEEEpubidadjcol in the second
% column for its text to clear the IEEEpubid mark.

% use for special paper notices
%\IEEEspecialpapernotice{(Invited Paper)}

% make the title area
\maketitle

% As a general rule, do not put math, special symbols or citations
% in the abstract or keywords.
\begin{abstract}
In this note, we develop an adaptive observer for a class of nonlinear systems with switched unknown parameters to estimate the states and parameters simultaneously.
The main challenge lies in how to eliminate the disturbance effect of zero-input responses caused by the switching on the parameter estimation. These responses depend on the unknown states at switching instants (SASI) and constitute an additive disturbance to the parameter estimation, which obstructs parameter convergence to zero. Our solution is to treat the zero-input responses as excitations instead of disturbances. This is realized by first augmenting the system parameter with the SASI and then developing an estimator for the augmented parameter using the \textit{dynamic regression extension and mixing} (DREM) technique. Thanks to its property of element-wise parameter adaptation, the system parameter estimation is decoupled from the SASI.
As a result, the estimation errors of system states and parameters converge to zero asymptotically. 
Furthermore, the robustness of the proposed adaptive observer is guaranteed in the presence of disturbances and noise. A numerical example validates the effectiveness of the proposed approach.
\end{abstract}

% Note that keywords are not normally used for peerreview papers.
\begin{IEEEkeywords}
Adaptive observer, Parameter estimation, State observer, Switched systems.
\end{IEEEkeywords}

% For peer review papers, you can put extra information on the cover
% page as needed:
% \ifCLASSOPTIONpeerreview
% \begin{center} \bfseries EDICS Category: 3-BBND \end{center}
% \fi
%
% For peerreview papers, this IEEEtran command inserts a page break and
% creates the second title. It will be ignored for other modes.
\IEEEpeerreviewmaketitle

\section{Introduction}

Over the last decades, a lot of efforts have been devoted to the simultaneous state and parameter estimation of dynamical systems with adaptive observers.
Early results of adaptive observer design focus on systems, which can be transformed into canonical form\cite{ioannou1996robust,kreisselmeier1977adaptive,marino1992global,besanccon2000remarks}.
In \cite{zhang2002adaptive}, a new adaptive observer is proposed for a class of time-varying systems, which does not require the system to be transformed into canonical form. %achieves the exponential convergence of the linear combination of state and parameter estimation errors. 
Extensions of this adaptive observer to nonlinearly parameterized systems\cite{farza2009adaptive}, systems nonlinear in the parameters\cite{tyukin2013adaptive}, and stochastic systems\cite{zhang2018adaptive} are reported. These approaches also enjoy a wide range of applications, e.g., biodiesel fueled engines\cite{zhao2014adaptive}, lithium-ion batteries\cite{wang2014adaptive}, and antilock braking systems\cite{aguado2018switched}. 
The fundamental idea of most referenced adaptive observers is to construct a linear regression equation (LRE) by utilizing filtering operation on known signals depending on the inputs and outputs. The LRE enables the application of various adaptive parameter estimation approaches to estimate the unknown parameters. Then the state observation is conducted with the estimated parameter. It is worth pointing out that the LRE captures the forced response of the filtered system whereas the zero-input response stemming from the filtering operation is treated as a disturbance and mostly disregarded, as it is exponentially decaying and does not destroy the convergence of the estimation errors. 

Despite the above-mentioned advances, the existing adaptive observers focus on systems with constant unknown parameters and cannot be applied to systems with switched unknown parameters. In practice, the operation conditions of most plants may change, which cannot be modeled using constant parameters. Instead, systems with switched unknown parameters are used to depict this phenomenon\cite{liberzon2003switching,di2009hybrid,kersting2019recursive,moustakis2018fault}. This motivates us to explore adaptive observers for systems with switched unknown parameters. 
The main challenge in this regard lies in the disturbance effect of zero-input responses caused by the switching. These responses are products of unknown states at switching instants (SASI) and the known transient terms. They act as a non-vanishing unknown additive disturbance term to the LRE and prevent the parameter estimation error from converging to zero. 
This problem cannot be solved by using conventional adaptive observers. However, the recently proposed adaptive observers using dynamic regressor extension and mixing (DREM)\cite{ortega2015parameter,pyrkin2019adaptive,ortega2021generalized} provide new inspiration to cope with the zero-input responses.
These approaches treat the initial state as an unknown parameter and transform the state observation problem into the parameter estimation problem of the initial state. The key feature of these DREM-based approaches is the ability to ensure the \textit{element-wise} parameter adaptation, namely, the adaptation of each element of the estimated parameter is decoupled from each other. Although methods in \cite{ortega2015parameter,pyrkin2019adaptive,ortega2021generalized} are not eligible for switched systems, we are inspired by their element-wise adaptation property and develop an adaptive observer for systems with switched unknown parameters by using DREM, which overcomes the challenge of the treatment of zero-input responses.

The main contribution of this technical note is that we develop an adaptive observer for a class of systems with switched unknown parameters. We achieve asymptotic convergence of state and parameter estimation errors despite the presence of the non-vanishing zero-input responses. Furthermore, the robustness of the proposed adaptive observer is analyzed for the case with noise and disturbances. We emphasize the novelty of the technical route through which we cope with the zero-input responses. Specifically, distinct from the most adaptive observers for systems with constant unknown parameters\cite{kreisselmeier1977adaptive,marino1992global,besanccon2000remarks,zhang2002adaptive,farza2009adaptive,tyukin2013adaptive,zhang2018adaptive,zhao2014adaptive,wang2014adaptive,aguado2018switched}, where the zero-input response is viewed as a disturbance and disregarded, we exploit the known information of the zero-input responses to construct the regressor and augment the system parameters with SASI (the unknown part of zero-input responses).
To decouple the system parameter estimation and the evolution of SASI, we propose a DREM-based parameter estimator for the augmented parameter. Thanks to its element-wise adaptation property, the asymptotic convergence of state and parameter estimation errors is achieved.

The rest of this note is structured as follows. The problem formulation and preliminaries are given in Sec. \ref{sec: pre}. The proposed adaptive observer is depicted in Sec. \ref{sec: main}. The numerical validation is shown in Sec.\ref{sec: sim}. In Sec. \ref{sec: con}, the conclusion is given and future work is discussed.
\section{Problem Formulation and Preliminaries}
\label{sec: pre}
Consider the uncertain nonlinear switched system
\begin{subequations}\label{eqn: plant_ss}
\begin{align}
%    \begin{split}
%\tag{\ref{eqn: plant_ss}}
        \dot{x}&=A x + B u + \Psi(y,u) \theta_{\sigma(t)}^*, \label{eqn: plant_ss1}\\
        y&=C x, \label{eqn: plant_ss2}
%    \end{split}
\end{align}
\end{subequations}
where $x \in \mathbb{R}^n$ is the state vector, $y \in \mathbb{R}$ denotes the output signal, and $u \in \mathbb{R}$ represents the input signal of the system. $A \in \mathbb{R}^{n \times n}, B \in \mathbb{R}^{n \times 1}, C \in \mathbb{R}^{1 \times n}$ are known constant matrices. The nonlinearity $\Psi(y,u) \in \mathbb{R}^{n \times m}$ is a known time-varying matrix depending on the output $y$ and the input $u$. The switched system (\ref{eqn: plant_ss}) has $s\in\mathbb{N}^+$ subsystems and $\theta_{\sigma(t)}^*\in \mathbb{R}^{m}$ denotes the switched unknown parameter vector with $\theta_{\sigma(t)}^* \in \{\theta_{1}^*,\theta_{2}^*,\cdots,\theta_{s}^*\}$. The switching signal $\sigma(t): [0,\infty) \to \mathcal{I} \triangleq \{1,2,\cdots,s\}$ is a known piecewise constant function. It governs, which subsystem is activated. Namely, for $\sigma(t)=i, i\in \mathcal{I}$, we have $\theta_{\sigma(t)}^*=\theta_i^*$ and we say that $i$-th subsystem is activated at time $t$. To characterize the switching instants, let the set of switching time instants represented by $\{t_1, t_2, \cdots, t_k, \cdots\}$ for $k\in \mathbb{N}^+$ and the initial time instant denoted by $t_0$. 
%\textcolor{red}{say sth more about this switching signal??}\\
%\textcolor{red}{minimum phase??}\\
%\textcolor{red}{$m<n$??}

Different from the well-known system form with constant unknown parameters studied in\cite{zhang2002adaptive,aguado2018switched,zhang2018adaptive}, system (\ref{eqn: plant_ss}) depicts systems with switched unknown parameters. It is of practical interest as it can model many engineering applications operating in multiple modes such as aircraft wing\cite{zavieh2013intersection}, mechanical systems with friction\cite{van2008tracking} and backlash\cite{yamada2018piecewise}.
%with switching or piecewise smooth dynamics.

The problem to be solved in this paper is formulated as follows:
given a switched system (\ref{eqn: plant_ss}) with %known switching signal $\sigma(t)$ and 
the unknown subsystem parameters $\theta_i^*, i\in\mathcal{I}$, design an adaptive observer based on the input $u$ and the output $y$ to simultaneously estimate the system state $x$ and the subsystem parameters $\theta_i^*, i\in \mathcal{I}$ with asymptotic convergence of the estimation errors.

For the observer design in this paper, we make the following assumptions:

%In this work, we focus on the switched adaptive observer for PWA systems with unknown subsystem parameters but known region partitions. Based on this assumption, the region partitions and thus the indicator functions $\chi_i(t)$ are known. In the practical applications, a large class of PWA systems have the regions determined by signals, which can be viewed as system outputs. For instance, master-slave systems in \cite{van2008tracking} and mechanical vibrating systems in \cite{di2012hybrid}.
%\begin{assumption}
%\label{asp: dwell}
%The given input signal $u$ leads to bounded $x, y$ and repeated activation of all the subsystems with a certain dwell time $\tau_D$. 
%\begin{enumerate}[label=(A.\arabic*)]
%    \item \label{asp: bound} The state $x(t)$, the control input $u(t)$ and the parameters $\theta_i$ are bounded, i.e., $x(t) \in X, u(t) \in U, \forall t\geq 0$ and $\theta_i \in \Theta_i, i\in\mathcal{I}$ with $X \in \mathbb{R}^n, U\in\mathbb{R},\Theta_i\in\mathbb{R}^{2n}$ being compact sets.
%    \item \label{asp: dwell} The number of switches $N(t+T,t)$ within time interval $[t,t+T)$ satisfies $N(t+T,t) \leq C + \mu T, \forall t,T \geq 0$ for some positive constants $C, \mu$.
%\end{enumerate}
%\end{assumption}
\begin{assumption}
\label{asp: bounded}
The state $x(t)$, the output $y(t)$, the input $u(t)$ and the parameters $\theta_i^*$ are bounded.
i.e., $x(t) \in X, y(t) \in Y, u(t) \in U, \forall t\geq 0$ and $\theta_i^* \in \Theta_i, i\in\mathcal{I}$ with $X \in \mathbb{R}^n, Y \in \mathbb{R}, U\in\mathbb{R},\Theta_i\in\mathbb{R}^{m}$ being compact sets.
\end{assumption}
Assumption \ref{asp: bounded} is a common assumption in adaptive observer design problem\cite{farza2009adaptive}\cite{liu2020switched}. %It tells that the signals we use for the adaptive observer design are bounded.

%\textcolor{blue}{For the robustness analysis later, we need the notion ``persistence of excitation", which is defined as follows.}

%Assumption \ref{asp: dwell} on the number of the switches over a time interval with the length of $T$ constraints the average frequency of the switches among subsystems, which is characterized by $\mu$. A small $\mu$ reveals slow switching. 
%This assumption ensures the utility of the obtained data. Besides, the obtained data should contain information from all the subsystems.

\section{Main Results}
\label{sec: main}
In this section, we introduce the proposed adaptive observer to solve the above-mentioned problem. We first derive the LRE of (\ref{eqn: plant_ss}) and redefine the role of the transient zero-input responses caused by switching by augmenting the unknown parameters with the SASI.
%to establish the relationship between known signals and unknown parameters. The role of the zero-input responses in LRE will be thoroughly discussed. 
%characterize the role of the transients caused by the switching. 
Then, we develop a DREM-based parameter estimator to decouple the parameter estimation from the SASI. %transform these transients such that they contribute to the parameter estimation rather than disturbe.
%The reason why DREM is instrumental for the parameter estimation will be depicted. 
Based on the estimated parameter, we design the state observer and conduct the robustness analysis.
\subsection{Derivation of LRE}
%Suppose $\sigma(t) = i$ for $t \in [t_k, t_{k+1}), k \in \mathbb{N}$. 
Let us start by transforming the system (\ref{eqn: plant_ss}) into a LRE. The goal of this step is to establish an algebraic relation between the unknown parameters and the known signals. 
%In our context, it is crucial to explore how the transients caused by switching affects the LRE.
%\textcolor{blue}{To explore the difference between adaptive observers for systems with constant unknown parameters and switched unknown parameters, it is crucial to derive the system parameterization for system (\ref{eqn: plant_ss}) to characterize the role of transients caused by the switching behavior.}
We rewrite (\ref{eqn: plant_ss1}) as 
\begin{equation}
    \label{eqn: plant_ss_KC}
        \dot{x}=(A-KC) x + B u + Ky + \Psi(y,u) \theta_{\sigma(t)}^*%,\quad \sigma(t) \in \mathcal{I}=\{1,\cdots,s\}\\
\end{equation}
with $K\in\mathbb{R}^{n \times 1}$ being an output feedback gain such that $(A-KC)$ is Hurwitz. The time response of $x(t)$ for the interval $t\in[t_k,t_{k+1}), k \in \mathbb{N}$, in which $\theta_{\sigma(t)}^*$ remains constant, can be written as
\begin{align}
    \begin{split}
    \label{eqn: x_solution}
        x(t)=\Phi(t,t_k)x(t_k)+&\int_{t_k}^t\Phi(t,\tau)(Bu(\tau)+Ky(\tau))\textrm{d}\tau\\ &+\int_{t_k}^t\Phi(t,\tau)\Psi\theta_{\sigma(t)}^*\textrm{d}\tau,
    \end{split}
\end{align}
where $\Phi(t,\tau)$ is the state transition matrix associated with $(A-KC)$.
From (\ref{eqn: x_solution}) we can see that three components constitute the solution of $x$ for $t \in [t_k, t_{k+1}), k\in \mathbb{N}$: the zero-input response associated with the SASI $x(t_k)$, the forced response driven by the known signal $B u + Ky$ and the forced response driven by the switched uncertain part $\Psi \theta_{\sigma}^*$. 
These two forced responses can also be described by two auxiliary signals $x_u$ and $x_{\theta}$ generated by the following dynamics
\begin{subequations}\label{eqn: decompose_x}
\begin{align}
    \dot{x}_u&=(A-KC) x_u + B u + Ky, \hspace{0.5em} x_u(t_k)=0, \\
    \dot{x}_{\theta}&=(A-KC) x_{\theta} + \Psi\theta_{\sigma(t)}^*, \quad \quad x_{\theta}(t_k)=0, k\in \mathbb{N}
\end{align}
\end{subequations}
Therefore, for $t\in [t_k, t_{k+1}), k\in \mathbb{N}$, equation (\ref{eqn: x_solution}) becomes
\begin{equation}
\label{eqn: decompose_x_2}
    x=\Phi(t,t_k)x(t_k)+x_u+x_{\theta}.
\end{equation}
Furthermore, let the signal matrix $\Upsilon(t)\in\mathbb{R}^{n \times m}$ be generated by the following dynamics
\begin{equation}
\label{eqn: decompose_x_3}
    \dot{\Upsilon}=(A-KC) \Upsilon + \Psi, \quad \Upsilon(t_k)=0, k\in \mathbb{N},
\end{equation}
from which one obtains $\Upsilon(t)=\int_{t_k}^t \Phi(t,\tau)\Psi\textrm{d}\tau, t\in [t_k, t_{k+1})$. This together with the solution of $x_{\theta}$ (the third term in (\ref{eqn: x_solution})) leads to%$x_{\theta}=\Upsilon\theta_{\sigma(t)}^*$.
\begin{equation}
\label{eqn: x_theta}
    x_{\theta}=\Upsilon\theta_{\sigma(t)}^*.
\end{equation}

The signals $x_u, x_\theta, \Upsilon$ can be respectively viewed as filtered signals of $Bu+Ky, \Psi \theta_{\sigma(t)}^*, \Psi$ with the filter parameter $(A-KC)$. $x_{\theta}$ is unknown and to be estimated. $x_u, \Upsilon$ are known and will be used later for the adaptive observer design.
\begin{remark}
In the adaptive observer design for non-switching systems\cite{zhang2002adaptive}, it suffices to specify zero initial states of the auxiliary filtered signal $x_u, \Upsilon$ at $t_0$. In contrast, these signals in our context are reset to zero at each time instant $t_k$ ($k=0$ for the initial instant and $k \in \mathbb{N}^+$ for switching instants, see (\ref{eqn: decompose_x}), (\ref{eqn: decompose_x_3})). Such reset is essential for a clear decomposition of $x$ into the zero-input response (see $\Phi(t,t_k)x(t_k)$ in (\ref{eqn: decompose_x_2})) and forced responses (see $x_u, x_{\theta}$ in (\ref{eqn: decompose_x_2})) for every continuous interval $[t_k, t_{k+1}), k\in\mathbb{N}$. 
\end{remark}
Recall that the goal of this section is to establish an algebraic relation between the unknown parameters and known signals by transforming the system (\ref{eqn: plant_ss}) into a LRE. To achieve this, we take (\ref{eqn: x_theta}) into (\ref{eqn: decompose_x_2}), move the known signal $x_u$ to the left side of (\ref{eqn: decompose_x_2}), and multiply both sides with $C$, which yields for $t \in [t_k, t_{k+1}), k\in \mathbb{N}$ the following LRE
\begin{equation}
\label{eqn: parameterization}
    z=C\Upsilon\theta_{\sigma(t)}^*+C\Phi(t,t_k)x(t_k).
\end{equation}
with $z = y-Cx_u$. 
%Let $z = y-Cx_u$.
\begin{remark}
For systems without switching, the LRE would become $z=C\Upsilon\theta^*+C\Phi(t,t_0)x(t_0), t \in [t_0, \infty)$ with $\theta^*$ being the constant unknown parameter. In most of the approaches of the current line of research \cite{kreisselmeier1977adaptive,marino1992global,besanccon2000remarks,zhang2002adaptive,farza2009adaptive,tyukin2013adaptive,zhang2018adaptive,zhao2014adaptive,wang2014adaptive,aguado2018switched}, only $z=C\Upsilon\theta^*$ is considered and the zero-input response $C\Phi(t,t_0)x(t_0)$ is disregard due to its exponentially decaying property. The work \cite{aranovskiy2015flux} provides rigorous asymptotic convergence analysis when $C\Phi(t,t_0)x(t_0)$ is not neglected. %The previous work for rigorous stability analysis\cite{aranovskiy2015flux} and performance improvement\cite{katiyar2019initial} are done by treating this response as a disturbance term. 
As oppose to the decaying property in these papers, the zero-input responses $C\Phi(t,t_k)x(t_k)$ in our case build a non-vanishing disturbance signal under intermittent switching, as each switch triggers a zero-input response depending on the SASI $x(t_k), k\in\mathbb{N}^+$. %Viewing over the whole time interval $[t_0, \infty)$, these responses build a non-vanishing disturbance signal, whose 
Consequently, the effect of these transients on the parameter estimation cannot be neglected. 
%To eliminate this disturbance effect, our parameterization treats $x(t_k)$ as an unknown parameter and injects the state transition $\Phi(t,t_k)$ into the regressor $\nu$. 
\end{remark}
The disturbance effect of $C\Phi(t,t_k)x(t_k), k\in \mathbb{N}$ is the main obstacle to obtain the asymptotic convergence of the estimation errors. Observe that $C\Phi(t,t_k)x(t_k)$ consists of the known part $C\Phi(t,t_k)$ and the unknown part $x(t_k)$. Our solution concept is to treat the SASI $x(t_k)$ as a part of the unknown parameters and view $C\Phi(t,t_k)$ as a part of the regressor such that we can make full use of this known signal for the parameter estimation. Namely, we rewrite (\ref{eqn: parameterization}) as
\begin{equation}
\label{eqn: parameterization_aug}
    z=\nu^T \bar{\theta}^*(t)
\end{equation}
with the augmented parameter and regressor% $z = y-Cx_u$ and
\begin{align}
\begin{split}
\label{eqn: bar_theta}
%    z = y-Cx_u&\in \mathbb{R}\\
    \nu^T=[C\Upsilon, C\Phi(t,t_k)] &\in \mathbb{R}^{1 \times (m+n)},\\
    %\in\mathbb{R}^{1 \times (m+n)}
    \bar{\theta}^*(t)=
    \begin{bmatrix}
        \theta_{\sigma(t)}^*\\
        x^*(t)
    \end{bmatrix}&\in \mathbb{R}^{(m+n)},
\end{split}
\end{align}
where $x^*(t) \in \mathcal{X} \triangleq \{x(t_1), x(t_2), \cdots\, x(t_k), \cdots \}$ and $x^*(t)=x(t_k)$ for $t \in [t_k, t_{k+1}), k\in \mathbb{N}$. Viewing over the whole time interval $t\in [t_0,\infty)$, $x^*(t)$ is a piecewise constant vector. 

The augmented parameter vector $\bar{\theta}^*(t)$ in (\ref{eqn: bar_theta}) is a switched parameter, which remains constant within each interval $[t_k, t_{k+1})$ for $k \in \mathbb{N}$ and switches at each $t_k$. It consists of two parts: the parameters to be estimated $\theta_i^*$, $i\in \mathcal{I}$ and the state at each switching instant $x^*(t) \in \mathcal{X}$. The cardinality of $\mathcal{I}$ is $s$ while the cardinality of $\mathcal{X}$ is unknown. 
%Note that the second part $X(t)$ may take infinitely many different values as $k\to \infty$, 
Due to the mismatch of the cardinalities, it is necessary to develop a parameter estimator, which enables separable adaptations of $\theta_{\sigma(t)}^*$ and $x^*$. To realize this, we propose a DREM-based parameter estimator as it can achieve element-wise adaptation of the parameter.

\subsection{DREM-based Parameter Estimator}
%\textcolor{blue}{Built upon the derived augmented LRE (\ref{eqn: parameterization_aug}) , we proceed now to design the DREM-based parameter estimator.}
%For the parameter estimation we employ the DREM technique. The reason for using DREM will be explained later in Sec.\ref{sec: reason_DREM} in detail.
The LRE (\ref{eqn: parameterization_aug}) is derived by incorporating the filterd signals $x_u, \Upsilon$ with the filter parameter $(A-KC)$. Based on this derivation,
the first step of DREM is to create $m+n$ LRE by using a set of filters with distinct filter parameters $(A-K_j C), j\in\{1,2,\cdots,m+n\}$ instead of using a single filter with $(A-K C)$. $K_j$ are designed such that $(A-K_j C)$ are Hurwitz. So we repeat the derivation from (\ref{eqn: plant_ss_KC}) to (\ref{eqn: bar_theta}) and replace $K$ with $K_j, j\in\{1,2,\cdots,m+n\}$. This leads to
\begin{equation}
    z_j=\nu_j^T \bar{\theta}^*(t)
\end{equation}
with
\begin{align}
    \begin{split}
    \label{eqn: def_nu_j}
        z_j &= y-C x_{uj},\\
        \nu_j^T &=[C\Upsilon_j, C\Phi_j(t,t_k)],
    \end{split}
\end{align}
where
\begin{align*}
    \begin{split}
        \dot{x}_{uj}&=(A-K_j C) x_{uj} + B u + K_jy, \hspace{0.5em} x_{uj}(t_k)=0,\\
        \dot{\Upsilon}_j&=(A-K_j C) \Upsilon_j + \Psi, \hspace{0.5em} \quad \quad \quad \quad \Upsilon_j(t_k)=0, k\in \mathbb{N}
    \end{split}
\end{align*}
and $\Phi_j(t,\tau)$ denotes the state transition matrix associated with $(A-K_j C)$. We rewrite the $m+n$ LRE in matrix form and obtain
\begin{equation}
\label{eqn: extended_regressor}
    Z_f=N^T \bar{\theta}^*(t)
\end{equation}
with
\begin{equation}
\label{eqn: def_N}
    Z_f=
    \begin{bmatrix}
        z_1\\
        \vdots\\
        z_{m+n}
    \end{bmatrix},
    \quad
    N^T=
    \begin{bmatrix}
        \nu_1^T\\
        \vdots\\
        \nu_{m+n}^T
    \end{bmatrix}.
\end{equation}
The second step of DREM suggests multiplying the extended regression equation (\ref{eqn: extended_regressor}) with the adjoint of the extended regressor matrix $N^T$, denoted by $\textrm{adj}(N^T)$. This leads to
%$\textrm{adj}(N^T) Z_f= \textrm{adj}(N^T) N^T \bar{\theta}_{\sigma(t)}^*$
\begin{equation}
\label{eqn: extended_regressor2}
    \textrm{adj}(N^T) Z_f= \textrm{adj}(N^T) N^T \bar{\theta}^*(t) = \textrm{det}(N) \bar{\theta}^*(t)
\end{equation}
with $\textrm{det}(\cdot)$ denoting the determinant of a matrix. Let $\Delta = \textrm{det}(N) \in \mathbb{R}$ and $\bar{\mathcal{Z}} = \textrm{adj}(N^T) Z_f \in \mathbb{R}^{m+n}$. From (\ref{eqn: extended_regressor2}) we obtain 
\begin{equation}
\label{eqn: extended_regressor3}
    \bar{\mathcal{Z}}=\Delta \bar{\theta}^*(t),
\end{equation}
As $\Delta$ is a scalar, (\ref{eqn: extended_regressor3}) leads to $m+n$ separate scalar regression equations
\begin{equation}
\label{eqn: extended_regressor4}
    \bar{\mathcal{Z}}=
    \begin{bmatrix}
        \bar{\mathcal{Z}}_1\\
        \vdots\\
        \bar{\mathcal{Z}}_m\\
        \bar{\mathcal{Z}}_{m+1}\\
        \vdots\\
        \bar{\mathcal{Z}}_{m+n}\\
    \end{bmatrix}
    =
    \Delta 
    \begin{bmatrix}
        \theta_{1\sigma(t)}^*\\
        \vdots\\
        \theta_{m\sigma(t)}^*\\
        x_{1}^*(t)\\
        \vdots\\
        x_{n}^*(t)
    \end{bmatrix}
    =\Delta \bar{\theta}^*(t),
\end{equation}
where $\bar{\mathcal{Z}}_j$ is the $j$-th element of the vector $\bar{\mathcal{Z}}$, $\theta_{j i}^*$ represents the $j$-th element of $\theta_i^*$, and $x_j^*$ denotes the $j$-th element of $x^*$. 
\iffalse
Picking out the first $m$ equations we have
\begin{equation}
\label{eqn: extended_regressor5}
    \mathcal{Z} =
    \begin{bmatrix}
        \bar{\mathcal{Z}}_1\\
        \bar{\mathcal{Z}}_2\\
        \vdots\\
        \bar{\mathcal{Z}}_m
    \end{bmatrix}
    =
    \Delta 
    \begin{bmatrix}
        \bar{\theta}_{1}^*(t)\\
        \bar{\theta}_{2}^*(t)\\
        \vdots\\
        \bar{\theta}_{m}^*(t)
    \end{bmatrix}
    =\Delta \theta_{\sigma(t)}^*
\end{equation}
\fi

We define the following indicator functions
\begin{equation}
    \chi_i(t)=
    \begin{cases}
        1, \quad \textrm{if } \sigma(t)=i,\\
        0, \quad \textrm{otherwise.}
    \end{cases}
\end{equation}
Let $\hat{{\theta}}_{ji}\in\mathbb{R}^{m}$ be the estimated value of ${\theta}^*_{ji}$ and let $\hat{\theta}_i=[\hat{{\theta}}_{1i},\cdots,\hat{{\theta}}_{mi}]^T$.
%and $\hat{\bar{{\theta}}}_{ji}\in\mathbb{R}^{m}$ represent its $j$-th element. 
The adaptation of the estimated parameter follows the adaptation law
\begin{equation}
\label{eqn: adaptive law}
    \dot{\hat{\theta}}_{ji}=
    %\begin{cases}
        \gamma_i \chi_i \Delta (\bar{\mathcal{Z}}_j-\Delta \hat{\theta}_{ji})
    %\end{cases}
\end{equation}
for $i\in\mathcal{I}, j\in\{1,\cdots,m\}$. $\gamma_i \in \mathbb{R}^{+}$ is a positive scaling factor. This adaptation law gives the parameter error equation
\begin{equation}
\label{eqn: parameter_error}
    \dot{\tilde{\theta}}_{ji}=
    %\begin{cases}
        -\gamma_i \chi_i \Delta^2 \tilde{\theta}_{ji}
    %\end{cases}
\end{equation}
for $\tilde{\theta}_{ji}$ being the $j$-th element of $\tilde{\theta}_i$ with $\tilde{\theta}_i = \hat{\theta}_i-\theta_i^*$. The indicator function $\chi_i$ in the adaptation law (\ref{eqn: adaptive law}) indicates that the value of $\hat{\theta}_i$ remains constant when subsystem $i$ is inactive and $\hat{\theta}_i$ is adapted during the active phase of subsystem $i$.
%\textcolor{red}{maybe reformulate $\bar{\theta}$? because $x(t_k)$ does not have switching.}

As underscored in the introduction, the conceptual highlight of this paper is to convert the role of the zero-input responses from disturbances to excitations. To better understand this concept, we observe from (\ref{eqn: def_nu_j}) and (\ref{eqn: def_N}) that the known part of the zero-input response $C\Phi_j(t,t_k)$ constitutes a part of the regressor matrix $N$, whose determinant $\Delta$ further drives the adaptation of the parameter estimation errors (see (\ref{eqn: parameter_error})). Furthermore, the element-wise adaptation property of DREM ensures that the evolution of the unknown part of the zero-input response $x(t_k)$ does not affect the adaptation of the estimated system parameters $\hat{\theta}_i$.
%we observe from (\ref{eqn: parameter_error}) that it is the scalar signal $\Delta$ driving the adaptation of the parameter estimation errors. Recall that $\Delta$ is the determinant of the regressor matrix $N$ (defined in (\ref{eqn: def_N})) and the known part of the zero-input response $C\Phi_j(t,t_k)$ constitutes a part of $N$ (see (\ref{eqn: def_nu_j})).}
\begin{remark}
Simulation results in \cite{zhang2002adaptive,zhang2018adaptive} show that adaptive observers proposed for non-switched systems have the tolerance for rare switch of the parameters at the expense of transient parameter estimation errors after each switch.
%, because only single estimated parameter is utilized to track the parameter with possible switches.
Due to these transient parameter estimation errors, provable asymptotic convergence of parameter and state estimation errors cannot be established. Moreover, performance degradation may occur when the time between two successive switches of parameters is not long enough to let the transients converge. 
These problems still persist in the very recently proposed adaptive observer for switched systems\cite{liu2020switched}. One feature that distinguishes our method from these methods is that each subsystem has its own estimated parameter $\hat{\theta}_i, i\in \mathcal{I}$. The estimated parameter $\hat{\theta}_i$ is only adapted when $i$-th subsystem is activated. Otherwise, $\hat{\theta}_i$ is frozen and is retained as the initial value for the next active period for $i$-th subsystem. Therefore, asymptotic convergence of parameter estimation errors can be achieved without suffering from transient errors after each switch.
\end{remark}
\subsection{Adaptive State Observer}
%$\chi_i$ is an indicator function, $\chi_i=1$ for $\sigma(t)=i$ and $\chi_i=0$ otherwise. It reveals that the adaptation of the subsystem parameters is turned on when this subsystem is activated. Otherwise, the subsystem parameters stay unchanged during the inactive phase.
After obtaining the parameter adaptation law (\ref{eqn: adaptive law}) for the estimated parameters $\hat{\theta}_i$, the adaptive state observer to estimate the state $x$ is given based on $\hat{\theta}_i$
\begin{align}
\begin{split}
\label{eqn: adaptive observer}
    \dot{\hat{x}}&=A \hat{x}+B u+ \Psi(y,u)\hat{\theta}_{\sigma(t)} + K(y-\hat{y}),\\
    \hat{y}&=C\hat{x}
\end{split}
\end{align}
where $\hat{x}, \hat{y}$ denote the estimated state and output, respectively. $K \in \mathbb{R}^{n \times 1}$ is to be chosen such that $(A-KC)$ is Hurwitz.

%Before we proceed with the convergence analysis of the state and parameter estimation, we impose the following assumption.
\begin{assumption}
\label{asp: L2}
The scalar signal $\chi_i\Delta$ satisfies $\chi_i\Delta \notin \mathcal{L}_2, \forall i \in \mathcal{I}$.
\end{assumption}
The performance of the proposed adaptive observer (\ref{eqn: adaptive law}), (\ref{eqn: adaptive observer}) can be summarized below.
\begin{theorem}
\label{thm: ideal}
Consider the switched system (\ref{eqn: plant_ss}) with unknown parameters $\theta_i^*, i\in \mathcal{I}$ and the adaptive observer (\ref{eqn: adaptive observer}) with the adaptation law (\ref{eqn: adaptive law}). If Assumption \ref{asp: bounded} and Assumption \ref{asp: L2} hold, then we have the parameter estimation error $\tilde{\theta}_i(t) \to 0, \forall i \in \mathcal{I}$ as $t \to \infty$ and the state estimation error $\tilde{x}(t)=\hat{x}(t)-x(t) \to 0$ as $t \to \infty$. 
\end{theorem}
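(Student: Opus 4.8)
The argument splits into two parts: first parameter convergence, then state convergence. For the parameter part, the key object is the scalar error equation \eqref{eqn: parameter_error}, $\dot{\tilde{\theta}}_{ji} = -\gamma_i \chi_i \Delta^2 \tilde{\theta}_{ji}$, which I would solve explicitly to get $\tilde{\theta}_{ji}(t) = \tilde{\theta}_{ji}(t_0)\exp\!\bigl(-\gamma_i \int_{t_0}^{t} \chi_i(\tau)\Delta^2(\tau)\,\mathrm{d}\tau\bigr)$. Since $\gamma_i > 0$, it follows that $\tilde{\theta}_{ji}(t) \to 0$ as $t\to\infty$ exactly when $\int_{t_0}^{\infty} \chi_i(\tau)\Delta^2(\tau)\,\mathrm{d}\tau = \infty$, i.e. when $\chi_i\Delta \notin \mathcal{L}_2$, which is precisely Assumption \ref{asp: L2}. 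This gives $\tilde{\theta}_{ji}(t) \to 0$ for every $j \in \{1,\dots,m\}$ and every $i \in \mathcal{I}$, hence $\tilde{\theta}_i(t) \to 0$ for all $i$. I would also note monotonicity ($|\tilde{\theta}_{ji}|$ is nonincreasing) so that the estimates stay in a bounded set, which I will need below.

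\textbf{State convergence.} For the second part, I would write the state estimation error dynamics. Subtracting \eqref{eqn: plant_ss1} from \eqref{eqn: adaptive observer} gives
\begin{equation}
\dot{\tilde{x}} = (A-KC)\tilde{x} + \Psi(y,u)\bigl(\hat{\theta}_{\sigma(t)} - \theta_{\sigma(t)}^*\bigr) = (A-KC)\tilde{x} + \Psi(y,u)\tilde{\theta}_{\sigma(t)}.
\end{equation}
This is a stable LTI system (since $A-KC$ is Hurwitz) driven by the input $\Psi(y,u)\tilde{\theta}_{\sigma(t)}$. By Assumption \ref{asp: bounded}, $y$ and $u$ evolve in compact sets, so $\Psi(y,u)$ is bounded; combined with the boundedness of $\tilde{\theta}_{\sigma(t)}$ (monotone decay from part one) and the fact that each $\tilde{\theta}_i(t)\to 0$, the forcing term $\Psi(y,u)\tilde{\theta}_{\sigma(t)}$ is bounded and converges to $0$ as $t\to\infty$. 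A standard input-to-state / converging-input–converging-state argument for exponentially stable linear systems — splitting the interval $[t_0,t]$ at a large time $T$, bounding the zero-input part by exponential decay and the forced part on $[T,t]$ by the supremum of the (small) input — then yields $\tilde{x}(t)\to 0$.

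\textbf{Main obstacle.} The delicate point is the switched nature of the forcing term $\tilde{\theta}_{\sigma(t)}$: although each fixed-mode error $\tilde{\theta}_i(t)\to 0$, the signal $\tilde{\theta}_{\sigma(t)}(t)$ jumps between these $s$ channels as $\sigma$ switches, so one must argue that the pointwise-in-mode convergence transfers to convergence of the composite signal. Since $\mathcal{I}$ is finite and each component converges, $\sup_{i\in\mathcal{I}}|\tilde{\theta}_i(t)| \to 0$, which dominates $|\tilde{\theta}_{\sigma(t)}(t)|$ regardless of the switching; this is where finiteness of $s$ is essential and where I would be careful. A secondary subtlety worth a remark is that Assumption \ref{asp: L2} is an implicit excitation condition on closed-loop signals (it plays the role of persistency of excitation here), and one should note it is the natural non-square-integrability requirement making the explicit integral diverge; no further regression properties of $\Delta$ are needed beyond boundedness, which follows from Assumption \ref{asp: bounded} via the filter dynamics defining $N$.
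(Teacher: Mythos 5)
Your proposal is correct and follows essentially the same route as the paper: explicitly integrating the scalar error equation \eqref{eqn: parameter_error} and invoking $\chi_i\Delta \notin \mathcal{L}_2$ for parameter convergence, then treating $\dot{\tilde{x}}=(A-KC)\tilde{x}+\Psi\tilde{\theta}_{\sigma(t)}$ as an exponentially stable system with vanishing input. The extra care you take with the boundedness of $\Psi(y,u)$ and with transferring per-mode convergence to the switched signal via finiteness of $\mathcal{I}$ only fills in details the paper leaves implicit.
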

%\textcolor{red}{how to say sth about switching behavior? Intermittent switching is equivalent to $\Delta \notin \mathcal{L}_2$?}
\begin{proof}
%\textcolor{red}{Define $\mathcal{T}_i$}. 
From (\ref{eqn: parameter_error}), we have 
\begin{equation}
    \tilde{\theta}_{ji}(t)=\mathrm{e}^{-\gamma_i \int_{t_0}^t \chi_i(s)\Delta^2(s)\textrm{d}s}\tilde{\theta}_{ji}(t_0).
\end{equation}
%\textcolor{red}{write the element-wise expression e.g. $\tilde{\theta}_{ji}$??}

Since $\chi_i\Delta \notin \mathcal{L}_2, i \in \mathcal{I}$, we have $\tilde{\theta}_{ji} \to 0$ and therefore, $\tilde{\theta}_i \to 0$ as $t \to \infty$. From (\ref{eqn: plant_ss}) and (\ref{eqn: adaptive observer}) we obtain for $\tilde{x}=\hat{x}-x$
\begin{equation}
    \dot{\tilde{x}}=(A-KC)\tilde{x}+\Psi \tilde{\theta}_{\sigma(t)}.
\end{equation}
Since $\tilde{\theta}_i \to 0, \forall i\in \mathcal{I}$ for $t \to 0$ and $(A-KC)$ is Hurwitz, it leads to $\tilde{x} \to 0$ as $t \to 0$.
\end{proof}
\begin{remark}
Regarding the zero-input response, the underlying concept of the adaptive observers for non-switched systems\cite{kreisselmeier1977adaptive,marino1992global,besanccon2000remarks,zhang2002adaptive,farza2009adaptive,tyukin2013adaptive,zhang2018adaptive,zhao2014adaptive,wang2014adaptive,aguado2018switched},\cite{aranovskiy2015flux} or adaptive control for switched systems\cite{sang2012adaptive,liu2021output} is treating the zero-input response as a disturbance, regardless of whether neglecting it\cite{kreisselmeier1977adaptive,marino1992global,besanccon2000remarks,zhang2002adaptive,farza2009adaptive,tyukin2013adaptive,zhang2018adaptive,zhao2014adaptive,wang2014adaptive,aguado2018switched} or including it in the stability analysis\cite{aranovskiy2015flux,sang2012adaptive,liu2021output}. Distinct from this concept, we provide a new perspective that the zero-input responses can be utilized as excitations to promote the parameter estimation such that asymptotic convergence of state and parameter estimation errors can be achieved.
\end{remark}
\begin{remark}
The DREM-based adaptive observers in \cite{ortega2015parameter,pyrkin2019adaptive} augment the system parameters with the initial state. Then the state estimation is established based on the identified initial state through an open-loop integration\cite{ortega2015parameter} or a non-fragile algebraic equation\cite{pyrkin2019adaptive}. As oppose to these approaches, the purpose of augmenting the system parameters with SASI in this note is to exploit the known information of the zero-input responses in the regressor whereas the estimation of SASI is not of interest and disregarded.
%The idea to augment the system parameters with the initial state can be found in other DREM-based adaptive observers\cite{ortega2015parameter,pyrkin2019adaptive}, where the estimated state is established based on the identified initial state through an open-loop integration\cite{ortega2015parameter} or a non-fragile algebraic equation\cite{pyrkin2019adaptive}. As oppose to these approaches, we augment the system parameters with SASI such that the known information of the zero-input responses can be exploited in the regressor while the estimation of SASI is not of interest and disregarded.
\end{remark}
\begin{remark}
For the convergence analysis of the parameter estimation in switched systems, it is a common condition that each subsystem is activated intermittently\cite{kersting2019recursive}\cite{liu2021output}\cite{yuan2016adaptive}, i.e., for any $i \in \mathcal{I}$ and $T \in \mathbb{R}^+$, there exist $\bar{t} >T$ and $t_\delta \in \mathbb{R}^+$ such that $\sigma(t)=i$ for $t \in [\bar{t}, \bar{t}+t_\delta)$. In our paper, this condition is implicitly included in the condition $\chi_i\Delta \notin \mathcal{L}_2, \forall i \in \mathcal{I}$ (Assumption \ref{asp: L2}). Specifically, if there exists subsystem $l \in \mathcal{I}$ that is not intermittently activated, then there exists $\check{t}\geq 0$ such that $\sigma(t)\neq l$ and $\chi_l(t)=0, \forall t \in [\check{t},\infty)$. This would lead to $\chi_l \Delta \in \mathcal{L}_2$, which contradicts with the condition $\chi_i\Delta \notin \mathcal{L}_2, \forall i \in \mathcal{I}$. Therefore, Assumption \ref{asp: L2} implies that every subsystem is activated intermittently.
%\textcolor{red}{This cannot rule out the case with $\sigma(t)\equiv i$}
\end{remark}

\subsection{Robustness Analysis}
%Until now we have established the asymptotic convergence of state and parameter estimation errors of the proposed adaptive observer in the ideal case. 
In this section, we study the robustness of the adaptive observer when applying it to systems with disturbances and noise. Consider the system
\begin{subequations}\label{eqn: plant_ss_corrupt}
\begin{align}
%    \begin{split}
%\tag{\ref{eqn: plant_ss}}
        \dot{x}&=A x + B u + \Psi(y,u) \theta_{\sigma(t)}^*+\omega, \label{eqn: plant_ss_corrupt_1}\\
        y&=C x,\label{eqn: plant_ss_corrupt_2}\\
        \bar{y}&=y + v, \label{eqn: plant_ss_corrupt_3}
%    \end{split}
\end{align}
\end{subequations}
where $\omega \in \mathbb{R}^n$ represents the state disturbance. $\bar{y}$ denotes the measured output with $v \in \mathbb{R}$ being the measurement noise. $\omega$ and $v$ are bounded, i.e., $\|\omega\| \leq \omega_0, |v| \leq v_0$ for some constants $\omega_0, v_0 \in \mathbb{R}^+$ with $\|\cdot\|$ denoting the norm and $|\cdot|$ representing the absolute value of a scalar.

With regards to the nonlinear function $\Psi(y,u)$ in (\ref{eqn: plant_ss_corrupt}), we make the additional assumption as follows:
\begin{assumption}
\label{asp: Lipschitz}
The function $\Psi(y,u)$ in (\ref{eqn: plant_ss_corrupt}) is Lipschitz with respect to $y$. That is, there exists positive constant $L_{\Psi} \in \mathbb{R}^+$ such that $\forall u \in U$ and $y, \bar{y} \in Y$ we have $\|\Psi(y,u)-\Psi(\bar{y},u)\| \leq L_{\Psi} |y - \bar{y}| = L_{\Psi} |v| \leq L_{\Psi} v_0$.
%, where $\|\cdot\|_F$ denotes the Frobenius norm. 
%\textcolor{red}{Frobenius norm necessary?}
\end{assumption}
%Besides, instrumental for the robustness analysis is the following excitation assumption.

To study how $\omega$ and $v$ affect the stability, we rederive the error equation (\ref{eqn: parameter_error}). We start by rewriting (\ref{eqn: plant_ss_corrupt_1}) as 
\begin{equation}
    \label{eqn: plant_ss_KC_corrupt}
        \dot{x}=(A-K_j C) x + B u + K_jy + \Psi(y,u) \theta_{\sigma(t)}^* + \omega
\end{equation}
for some $K_j \in\mathbb{R}^{n \times 1}, j\in\{1,2,\cdots,m+n\}$ such that $(A-K_j C)$ is Hurwitz. The time response of $x(t)$ for $t\in[t_k,t_{k+1}), k \in \mathbb{N}$ can be written as
\begin{align}
    \begin{split}
    \label{eqn: x_solution_corrupt}
        x(t)=&\Phi_j(t,t_k)x(t_k)+\int_{t_k}^t\Phi_j(t,\tau)(Bu+K_j y)\textrm{d}\tau\\ &+\int_{t_k}^t\Phi_j(t,\tau)\Psi(y,u)\theta_{\sigma}^*\textrm{d}\tau+\int_{t_k}^t\Phi_j(t,\tau)\omega \textrm{d}\tau.
    \end{split}
\end{align}
We use the measured output $\bar{y}$ to generate the filtered signals
\begin{align*}
    \begin{split}
        \dot{x}_{uj}&=(A-K_j C) x_{uj} + B u + K_j \bar{y}, \hspace{0.5em} x_{uj}(t_k)=0,\\
        \dot{\Upsilon}_j&=(A-K_j C) \Upsilon_j + \Psi(\bar{y},u), \hspace{0.5em} \quad \quad \Upsilon_j(t_k)=0, k\in \mathbb{N}
    \end{split}
\end{align*}
Let $z_j=\bar{y}-C x_{uj}$, which gives for $t \in [t_k, t_k+1), k \in \mathbb{N}$
\begin{align}
    z_j=Cx+v-C(\int_{t_k}^t\Phi_j(t,\tau)(Bu+K_j y+K_j v)\textrm{d}\tau).
\end{align}
Substituting $x$ with (\ref{eqn: x_solution_corrupt}) yields
%Taking the time-domain solution of $x_{uj}$ and $\Upsilon_j$ into (\ref{eqn: x_solution_corrupt}) yields
\begin{equation}
\label{eqn: z_corrupt}
    z_j=C\Phi_j(t,t_k)x(t_k)+C\Upsilon_j \theta_{\sigma(t)}^* + d_j
\end{equation}
with the disturbance-related term $d_j$ being expressed by
\begin{align}
\begin{split}
\label{eqn: d_j}
    d_j=&C\int_{t_k}^t\Phi_j(t,\tau)(\Psi(y,u)-\Psi(\bar{y},u))\theta_{\sigma(t)}^*\textrm{d}\tau\\
    &+C\int_{t_k}^t\Phi_j(t,\tau)\omega\textrm{d}\tau + v - C \int_{t_k}^t\Phi_j(t,\tau)K_j v \textrm{d}\tau.
\end{split}
\end{align}
With the regressor $\nu_j^T =[C\Upsilon_j, C\Phi_j(t,t_k)]$, equation (\ref{eqn: z_corrupt}) can be written in the linear regression form
\begin{equation}
    z_j=\nu_j^T \bar{\theta}^*(t) + d_j.
\end{equation}
Stacking $m+n$ equations yields
\begin{equation}
\label{eqn: extended_regressor_corrupt}
    Z_f=N^T \bar{\theta}^*(t) + d
\end{equation}
with $d=[d_1,d_2,\cdots,d_{m+n}]^T$.
\iffalse
\begin{equation}
    d=
    \begin{bmatrix}
        d_1\\
        \vdots\\
        d_{m+n}
    \end{bmatrix}.
\end{equation}
\fi
Multiplying both sides with $\textrm{adj}(N^T)$ leads to 
\begin{equation}
    \bar{\mathcal{Z}}=\Delta \bar{\theta}^*(t)+\bar{D}
\end{equation}
with $\bar{D}=\textrm{adj}(N^T) d$. We apply the same parameter adaptation law as in the ideal case (\ref{eqn: adaptive law}) and obtain the element-wise parameter error equation
\begin{equation}
\label{eqn: parameter_error_corrupt}
    \dot{\tilde{\theta}}_{ji}=
    %\begin{cases}
        -\gamma_i \chi_i \Delta^2 \tilde{\theta}_{ji} + \gamma_i \chi_i \Delta \bar{D}_j
    %\end{cases}
\end{equation}
where $i \in \mathcal{I}, j\in \{1,2,\cdots,m\}$, $\bar{D}_j$ is the $j$-th element of $\bar{D}$. Let $D$ be a vector of the first $m$ elements of $\bar{D}$, i.e., $D = [\bar{D}_1, \bar{D}_2, \cdots, \bar{D}_m]^T$. We can write (\ref{eqn: parameter_error_corrupt}) into the vector form
\begin{equation}
\label{eqn: parameter_error_vector_corrupt}
    \dot{\tilde{\theta}}_{i}=
    %\begin{cases}
        -\gamma_i \chi_i \Delta^2 \tilde{\theta}_{i} + \gamma_i \chi_i \Delta D%, \quad i \in \mathcal{I}, j\in \{1,2,\cdots,m\}
    %\end{cases}
\end{equation}
The adaptive observer is constructed based on the measured output $\bar{y}$ and (\ref{eqn: adaptive observer}) now becomes
\begin{align}
\begin{split}
\label{eqn: adaptive observer_corrupt}
    \dot{\hat{x}}&=A \hat{x}+B u+ \Psi(\bar{y},u)\hat{\theta}_{\sigma(t)} + K(\bar{y}-\hat{y}),\\
    \hat{y}&=C\hat{x}
\end{split}
\end{align}
\begin{assumption}[Persistence of Excitation (PE) \cite{ioannou1996robust}]
\label{asp: PE}
The scalar signal $\chi_i\Delta$ is PE $\forall i \in \mathcal{I}$. That is, there exist constants $\alpha_0, T_0>0$ such that
\begin{equation*}
   \frac{1}{T_0}\int_{t}^{t+T_0} \chi_i(\tau)\Delta^2(\tau) \textrm{d}\tau \geq \alpha_0, \quad \forall t \geq t_0.
\end{equation*}
\end{assumption}
The following theorem summaries the robustness of the proposed adaptive observer in the presence of disturbances and noise.
%\textcolor{red}{special case, expectation?}
\begin{theorem}
\label{thm: robust}
Consider the switched system (\ref{eqn: plant_ss_corrupt}) with unknown parameters $\theta_i^*, i\in \mathcal{I}$ and the adaptive observer (\ref{eqn: adaptive observer_corrupt}) with the adaptation law (\ref{eqn: adaptive law}). If Assumption \ref{asp: bounded}, Assumption \ref{asp: Lipschitz}, and Assumption \ref{asp: PE} hold, then the parameter estimation error $\tilde{\theta}_i$ and the state estimation error $\tilde{x}$ converge to the residual set
\begin{equation}
\label{eqn: robust_bound}
    \mathcal{R}_e=\{\tilde{x}, \tilde{\theta}_i \big| \|\tilde{x}\|+ \|\tilde{\theta}_i\| \leq \mu \sup_t\|\Delta D\|+c \}
\end{equation}
for some positive constants $\mu, c \in \mathbb{R}^+$.
\end{theorem}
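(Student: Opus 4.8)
The plan is to exploit the cascade structure of the error dynamics: the parameter error $\tilde\theta_i$ obeys the decoupled linear time-varying equation (\ref{eqn: parameter_error_vector_corrupt}) driven only by the exogenous signal $D$, and this error then enters the state error equation as a bounded perturbation. So I would first bound $\tilde\theta_i$ using the PE condition, then propagate that bound through a Lyapunov estimate for $\tilde x$, and finally add the two bounds to land in $\mathcal{R}_e$. A preliminary step is to certify that every signal appearing in the error equations is bounded: by Assumption \ref{asp: bounded}, $y,u$ are bounded, hence so is $\bar y=y+v$, and continuity of $\Psi$ on the compact range makes $\Psi(y,u)$ and $\Psi(\bar y,u)$ uniformly bounded; since the filters $(A-K_jC)$ are Hurwitz, $\Upsilon_j$, $\Phi_j(t,t_k)$, $N$, $\Delta=\det N$ are uniformly bounded, and $d$ in (\ref{eqn: d_j}) is bounded because each of its terms is either $v$ or the convolution of an exponentially decaying kernel with a bounded signal (the first term uses the Lipschitz bound of Assumption \ref{asp: Lipschitz}). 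Hence $\bar D=\mathrm{adj}(N^T)d$, $D$, and $\Delta D$ are bounded, so $\sup_t\|\Delta D\|<\infty$ and $\mathcal{R}_e$ is well defined; and once $\tilde\theta_i$ is shown bounded below, $\dot{\hat x}=(A-KC)\hat x+(\text{bounded})$ precludes finite escape.

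\textbf{Parameter error.} Writing $\phi_i:=\chi_i\Delta$ and using $\chi_i^2=\chi_i$, equation (\ref{eqn: parameter_error_vector_corrupt}) becomes the classical gradient estimator $\dot{\tilde\theta}_i=-\gamma_i\phi_i^2\tilde\theta_i+\gamma_i\phi_i D$ with a \emph{scalar, bounded} regressor $\phi_i$. Assumption \ref{asp: PE} says precisely that $\phi_i$ is PE, so the standard result for gradient estimators with bounded, persistently exciting regressors (see \cite{ioannou1996robust}) yields uniform exponential stability of the homogeneous part: the transition $\varphi_i(t,\tau)=\exp\big(-\gamma_i\int_\tau^t\phi_i^2(s)\,\mathrm{d}s\big)$ obeys $\varphi_i(t,\tau)\le\kappa_i\mathrm{e}^{-\lambda_i(t-\tau)}$ with $\kappa_i,\lambda_i>0$ depending only on $\gamma_i,\alpha_0,T_0$ and $\sup_t|\phi_i|$. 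Then variation of constants together with the pointwise bound $\|\phi_i(\tau)D(\tau)\|=\|\chi_i(\tau)\Delta(\tau)D(\tau)\|\le\|\Delta(\tau)D(\tau)\|\le\sup_s\|\Delta D\|$ gives $\|\tilde\theta_i(t)\|\le\kappa_i\mathrm{e}^{-\lambda_i(t-t_0)}\|\tilde\theta_i(t_0)\|+\tfrac{\gamma_i\kappa_i}{\lambda_i}\sup_s\|\Delta D\|$, hence $\tilde\theta_i$ is bounded and $\limsup_{t\to\infty}\|\tilde\theta_i(t)\|\le\tfrac{\gamma_i\kappa_i}{\lambda_i}\sup_t\|\Delta D\|$.

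\textbf{State error.} Subtracting (\ref{eqn: plant_ss_corrupt_1}) from (\ref{eqn: adaptive observer_corrupt}) and using $\bar y-C\hat x=-C\tilde x+v$ gives $\dot{\tilde x}=(A-KC)\tilde x+\Psi(\bar y,u)\tilde\theta_{\sigma(t)}+\delta$ with $\delta:=(\Psi(\bar y,u)-\Psi(y,u))\theta^*_{\sigma(t)}+Kv-\omega$, and Assumptions \ref{asp: bounded},\ref{asp: Lipschitz} bound $\|\delta\|\le\delta_0:=L_\Psi v_0\max_{i\in\mathcal{I}}\|\theta_i^*\|+\|K\|v_0+\omega_0$. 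With $P=P^T>0$ solving $(A-KC)^TP+P(A-KC)=-Q<0$ and $V=\tilde x^TP\tilde x$, one gets $\dot V\le-\lambda_{\min}(Q)\|\tilde x\|^2+2\|P\|\,\|\tilde x\|\big(\bar\Psi\|\tilde\theta_{\sigma(t)}\|+\delta_0\big)$ with $\bar\Psi=\sup_t\|\Psi(\bar y,u)\|$ and $\|\tilde\theta_{\sigma(t)}\|\le\max_{i\in\mathcal{I}}\|\tilde\theta_i\|$; a standard ultimate-boundedness argument then shows $\tilde x$ is bounded with $\limsup_{t\to\infty}\|\tilde x(t)\|\le g\big(\bar\Psi\max_{i\in\mathcal{I}}\limsup\|\tilde\theta_i\|+\delta_0\big)$ for a constant $g>0$ built from $\|P\|,\lambda_{\min}(Q),\lambda_{\max}(P),\lambda_{\min}(P)$. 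Inserting the parameter bound and adding the two estimates yields $\limsup_{t\to\infty}\big(\|\tilde x(t)\|+\|\tilde\theta_i(t)\|\big)\le\mu\sup_t\|\Delta D\|+c$, where $\mu$ aggregates $\gamma_i,\kappa_i,\lambda_i,g,\bar\Psi$ and $c=g\delta_0$, i.e.\ the errors converge to the residual set $\mathcal{R}_e$ in (\ref{eqn: robust_bound}).

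\textbf{Main obstacle.} The heart of the argument is the implication PE $\Rightarrow$ uniform exponential stability for (\ref{eqn: parameter_error_vector_corrupt}); the point worth care is that $\phi_i=\chi_i\Delta$ must be verified both bounded (so that $\kappa_i,\lambda_i$ are finite and uniform in time) and PE, and that absorbing $\chi_i$ into $\phi_i$ is what makes the estimate hold \emph{despite} $\tilde\theta_i$ being frozen while subsystem $i$ is inactive --- this is exactly where Assumption \ref{asp: PE}, and not merely $\chi_i\Delta\notin\mathcal{L}_2$ (Assumption \ref{asp: L2}), is needed. A secondary subtlety is that the disturbance gain has to be expressed through $\sup_t\|\Delta D\|$ rather than $\sup_t\|D\|$, which is why the pointwise inequality $\|\chi_i\Delta D\|\le\|\Delta D\|$ is used above, and the additive components of $\delta$ not premultiplied by $\Delta$ are precisely what generate the offset constant $c$ in (\ref{eqn: robust_bound}).
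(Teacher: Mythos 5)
Your proposal is correct and follows essentially the same route as the paper's proof: first establish boundedness of $d$ and hence of $D$ via the Lipschitz bound and the exponentially decaying state-transition kernels, then use PE of $\chi_i\Delta$ with variation of constants to get an exponential-plus-residual bound on $\tilde{\theta}_i$, and finally cascade this into the Hurwitz state-error dynamics to obtain the combined bound defining $\mathcal{R}_e$. The only (immaterial) differences are that you bound the state error via a quadratic Lyapunov function where the paper convolves directly with the transition matrix of $(A-KC)$, and you split $\Psi(\bar y,u)\hat\theta_{\sigma}-\Psi(y,u)\theta^*_{\sigma}$ with $\theta^*_{\sigma}$ rather than $\hat\theta_{\sigma}$ multiplying the Lipschitz difference; both choices yield the same structure of $\mu$ and $c$.
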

The proof of Theorem \ref{thm: robust} can be seen in Appendix \ref{sec: apd_proof}.
\begin{remark}
Compared to the disturbance-free case, Theorem \ref{thm: robust} requires a stronger excitation condition that $\chi_i\Delta$ is PE, which is instrumental to ensure the boundedness of $\tilde{x}$ and $\tilde{\theta}_i$. In case the PE condition cannot be satisfied in some circumstances, robust modifications such as projections and leakages\cite{ioannou1996robust} can be applied to the adaptation law (\ref{eqn: adaptive law}). The modified adaptation law together with the boundedness of $D$ in (\ref{eqn: parameter_error_vector_corrupt}) (proved in Appendix \ref{sec: apd_proof}) would lead to the boundedness of $\tilde{x}$ and $\tilde{\theta}_i$ (see \cite[chp. 9.2]{ioannou1996robust}).
\end{remark}

%\section{Discussion}
%\label{sec: discuss}
%\textcolor{red}{
%\begin{enumerate}
%    \item connections to classical adaptive observer (Zhang's approach and Sun's book)\\
%    \item filter design (convergence speed)\\
%    \item switched systems can use multiple parameters. Previous results for fault detection, not predictable.\\
%    \item finite time convergence?\\
%    \item compare to other robust approach e.g. Marino 2001. why there is no projection needed?
%\end{enumerate}
%}

\section{Numerical Examples}
\label{sec: sim}
In this section, the proposed adaptive observer is validated through a numerical example of the chaotic oscillator adjusted from the literature\cite{zhang2007output,chen2015adaptive}. Its system equation is given by
\begin{equation}
\label{eqn: Chua}
    \begin{cases}
        \dot{x}_1=p_0 (-x_1+x_2-g(x_1))\\
        \dot{x}_2=x_1-x_2+x_3\\
        \dot{x}_3=-q_0 x_2-r_0 x_3
    \end{cases}
\end{equation}
where $p_0=10, q_0=16, r_0=0.0385$ are known parameters. Let $x=[x_1, x_2, x_3]^T$ be the state vector. The system output $y=x_1$ is measurable and $x_2, x_3$ are to be estimated. The function $g(x_1)$ is a piecewise linear function
\begin{equation}
    g(x_1)=
    \begin{cases}
        -0.7143 x_1 -0.4286, &\textrm{for } x_1 \geq 1\\
        -1.1429 x_1, & \textrm{for } |x_1|<1\\
        -0.7143 x_1 +0.4286, &\textrm{for } x_1 \leq -1
    \end{cases}
\end{equation}
Therefore, the system (\ref{eqn: Chua}) can be written in form of (\ref{eqn: plant_ss}) with
\begin{equation}
    A=
    \begin{bmatrix}
        -p_0 & p_0 & 0\\
        1 & -1 & 1\\
        0 & -q_0 & -r_0
    \end{bmatrix},
    \quad
    \Psi=-p_0 
    \begin{bmatrix}
        y & 1\\
        0 & 0\\
        0 & 0
    \end{bmatrix},
\end{equation}
$B=[0,0,0]^T$ and $C=[1,0,0]$. The initial state of the system is $x(0)=[2.88,-0.066,-2.12]^T$. The nominal parameters to be estimated are $\theta_1^*=[-0.7143, -0.4286]^T$, $\theta_2^*=[-1.1429, 0]^T$, and $\theta_3^*=[-0.7143, 0.4286]^T$. The switching signal $\sigma(t)=1$ for $x_1(t) \in \Omega_1 = \{x_1 |\,  x_1 \geq 1\}$, $\sigma(t)=2$ for $x_1(t) \in \Omega_2 = \{x_1 | \, |x_1| < 1\}$, and $\sigma(t)=3$ for $x_1(t) \in \Omega_3 = \{x_1 |\,  x_1 \leq -1\}.$%\textcolor{red}{state space partition?}

Now we evaluate the estimation performance of our proposed adaptive observer in the ideal case. The filter parameters $\{K_j\}_{j=1}^5$ are chosen as $K_1=[0,-1,-15]^T$, $K_2=[-2,2.5,20]^T$, $K_3=[-2,0.1,1]^T$, $K_4=[-0.4,-0.4,-8]^T$, $K_5=[-8,6.5,18]^T$ and $K$ in the state observer (\ref{eqn: adaptive observer}) is $K=[-2,2.5,20]^T$. The initial value of the observer $\hat{x}(0)=0$. We specify the scaling factors $\gamma_i=10, i=\{1,2,3\}$. The switching signal is show in Fig. \ref{fig: mode_DREM}. We can observe the intermittent switching, namely, every mode is repeatedly activated. Fig. \ref{fig: Delta_DREM} shows the evolution of integrals $\int_0^t \chi_i(s) \Delta^2(s) \textrm{d}s, i \in \mathcal{I}$, from which one can conclude $\chi_i\Delta \notin \mathcal{L}_2$. The norm of the parameter estimation error for each subsystem $\|\tilde{\theta}_i\|$ is shown in Fig.\ref{fig: theta_DREM}, where dashed sections represent the inactive phase and solid sections represent the active phase. As it can be seen from Fig.\ref{fig: theta_DREM}, the value of $\|\tilde{\theta}_i\|$ stays unchanged during the inactive phase whereas it, thanks to the use of DREM, decreases monotonically during the active phase. Furthermore, The estimated parameters of all subsystems converge to $0$. The element-wise state estimation %$\|\hat{x}_1\|$, $\|\hat{x}_2\|$, and $\|\hat{x}_3\|$ 
is shown in Fig.\ref{fig: x1_DREM}, Fig.\ref{fig: x2_DREM}, and Fig.\ref{fig: x3_DREM}, respectively. The red solid lines display estimated states and the blue dashed lines represent real states. One can observe that the state estimation errors also converge to $0$, this together with the parameter convergence validates the theoretical results of Theorem \ref{thm: ideal} that the proposed method is able to eliminate the disturbance effect of the zero-input responses and achieves asymptotic convergence of state and parameter estimation errors.
\begin{figure}[h]
\centering
    \subcaptionbox{\label{fig: mode_DREM}}{\includegraphics[width=0.23\textwidth]{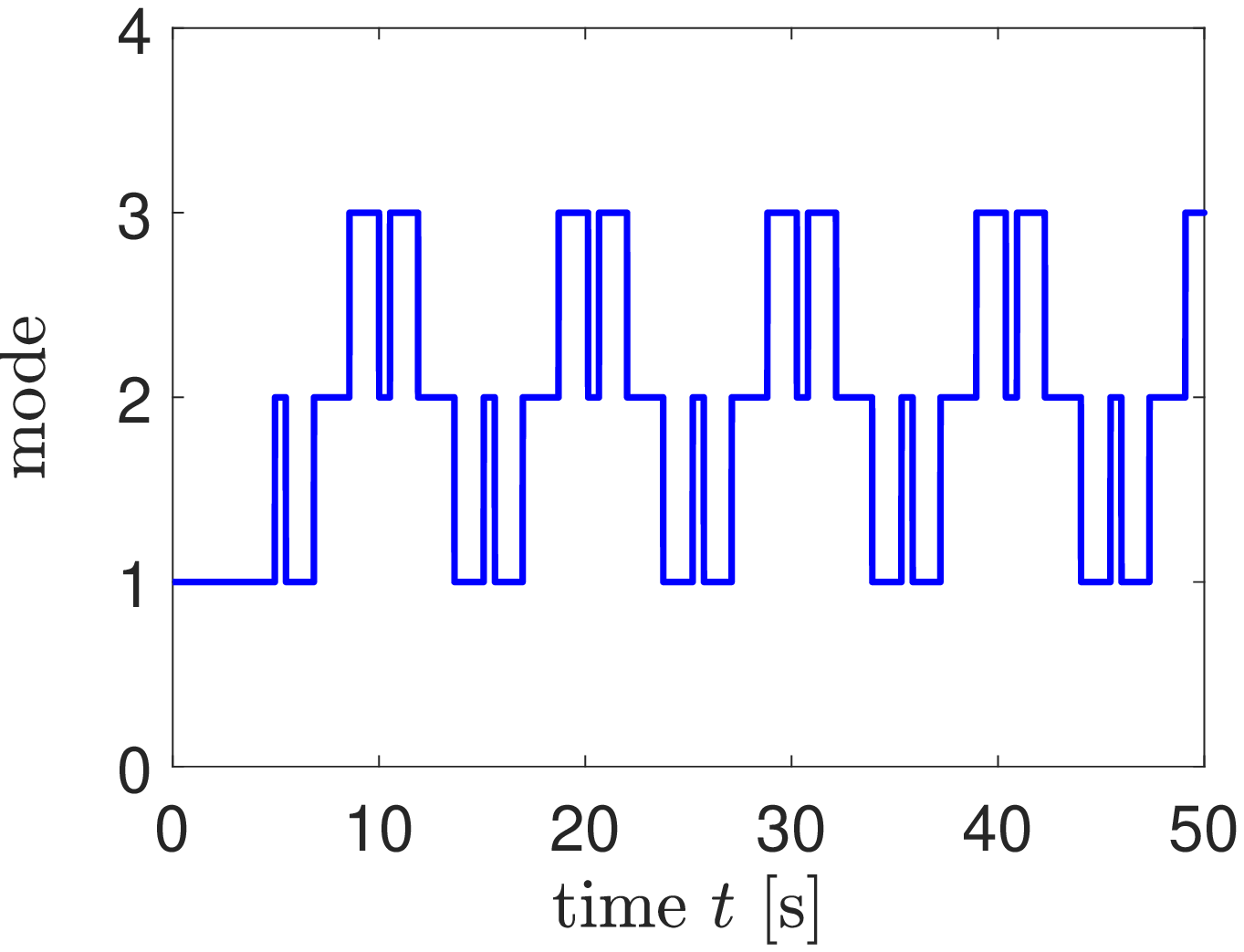}}
    \subcaptionbox{\label{fig: Delta_DREM}}{\includegraphics[width=0.23\textwidth]{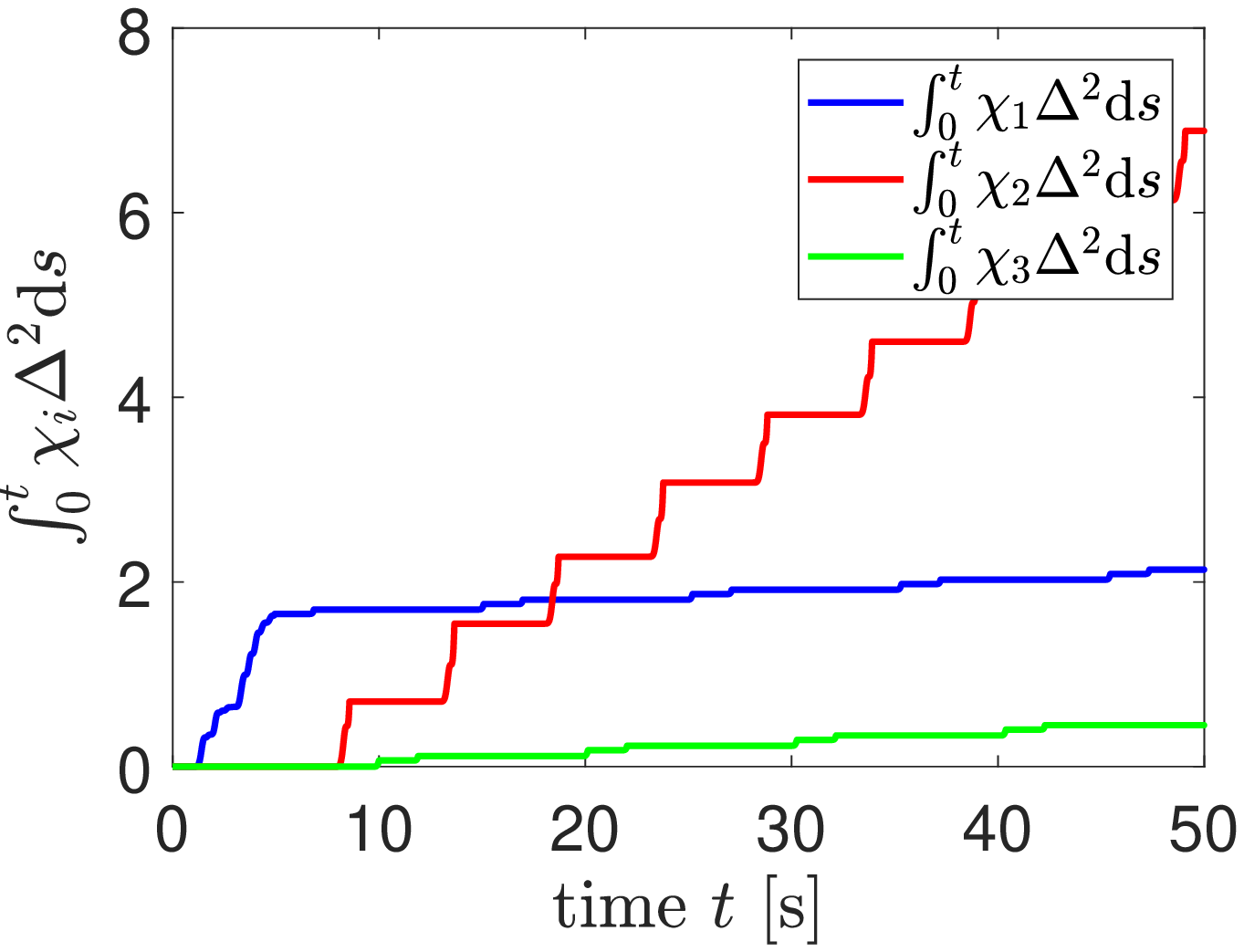}}
    \subcaptionbox{\label{fig: theta_DREM}}{\includegraphics[width=0.23\textwidth]{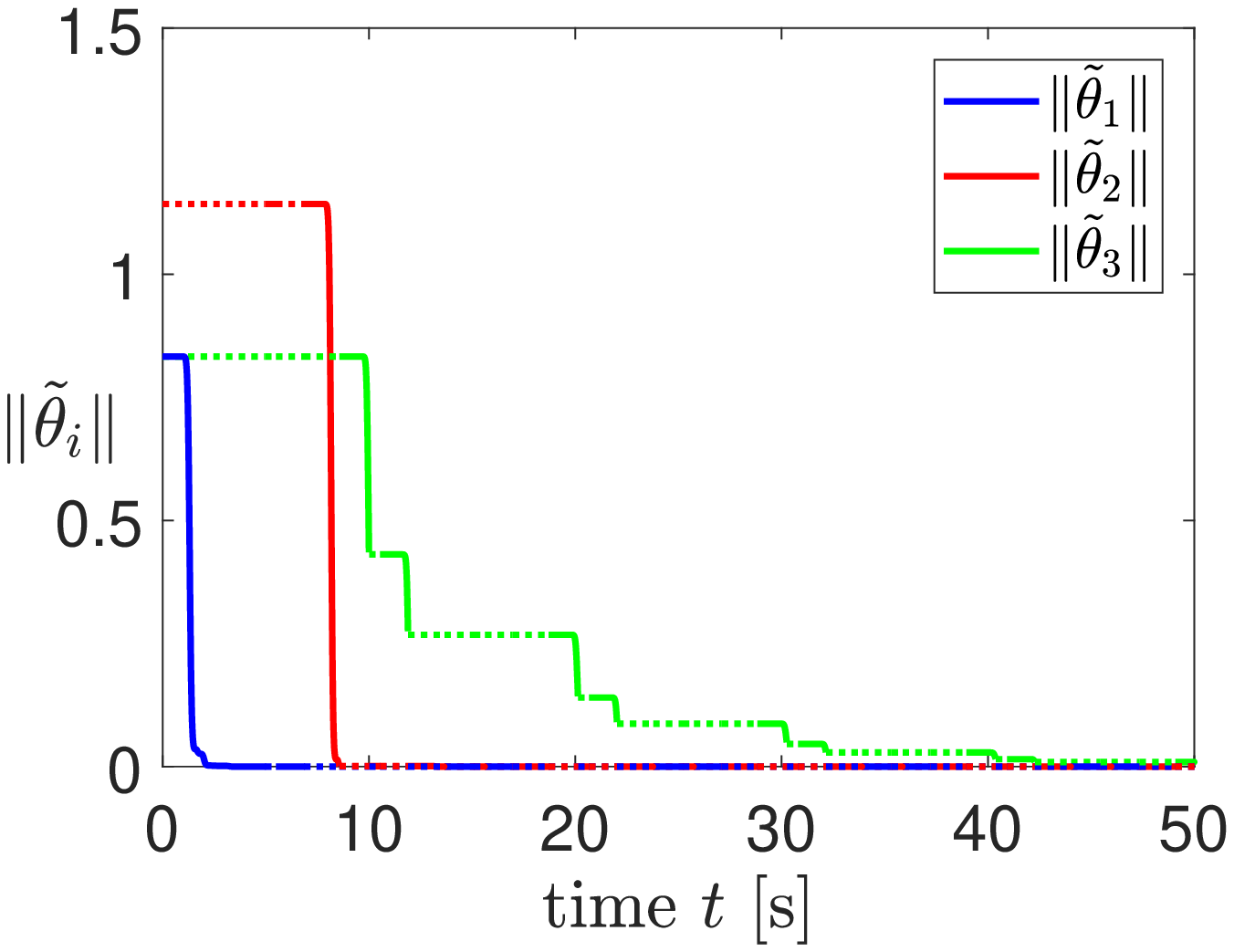}}
    \subcaptionbox{\label{fig: x1_DREM}}{\includegraphics[width=0.23\textwidth]{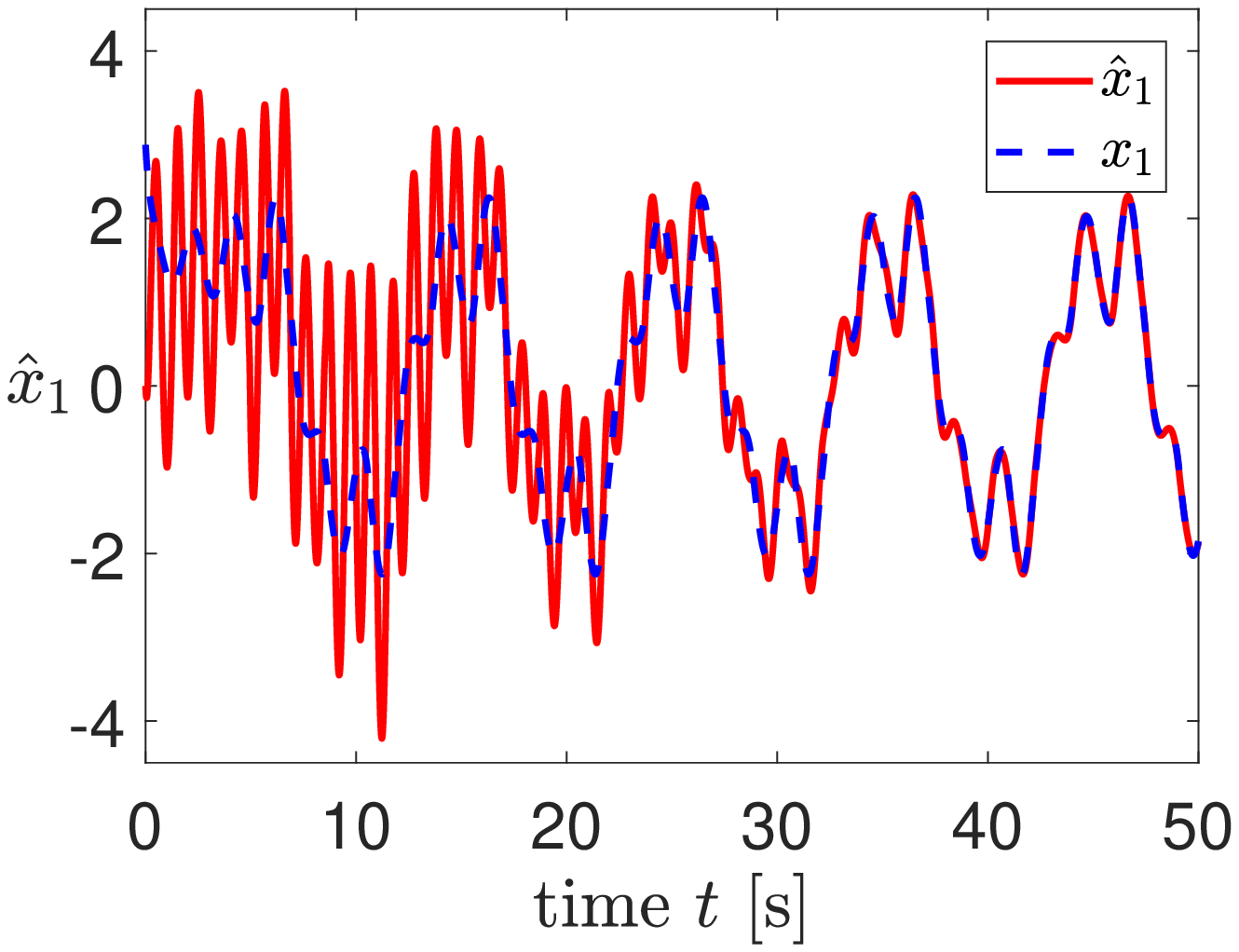}}
    \subcaptionbox{\label{fig: x2_DREM}}{\includegraphics[width=0.23\textwidth]{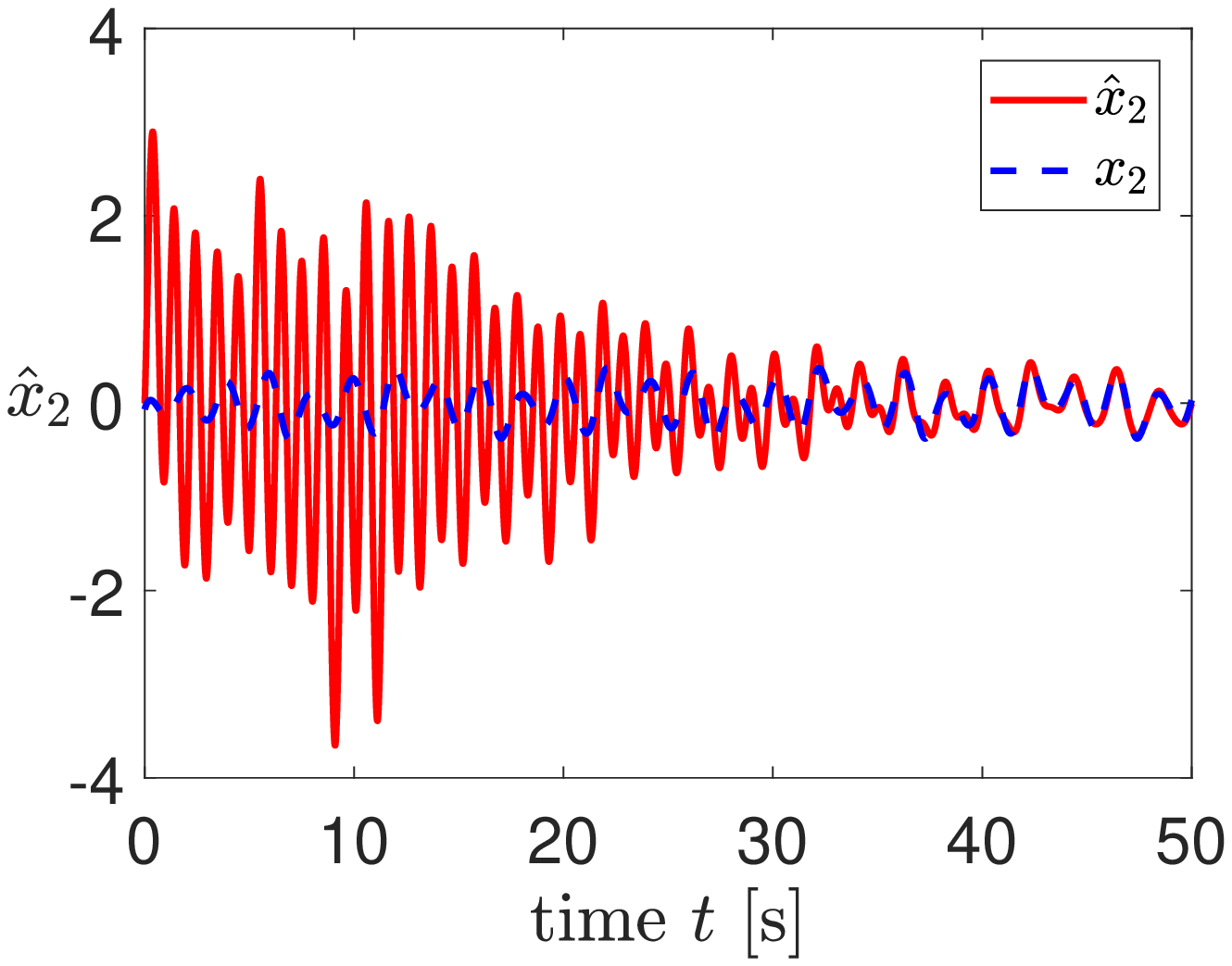}}
    \subcaptionbox{\label{fig: x3_DREM}}{\includegraphics[width=0.23\textwidth]{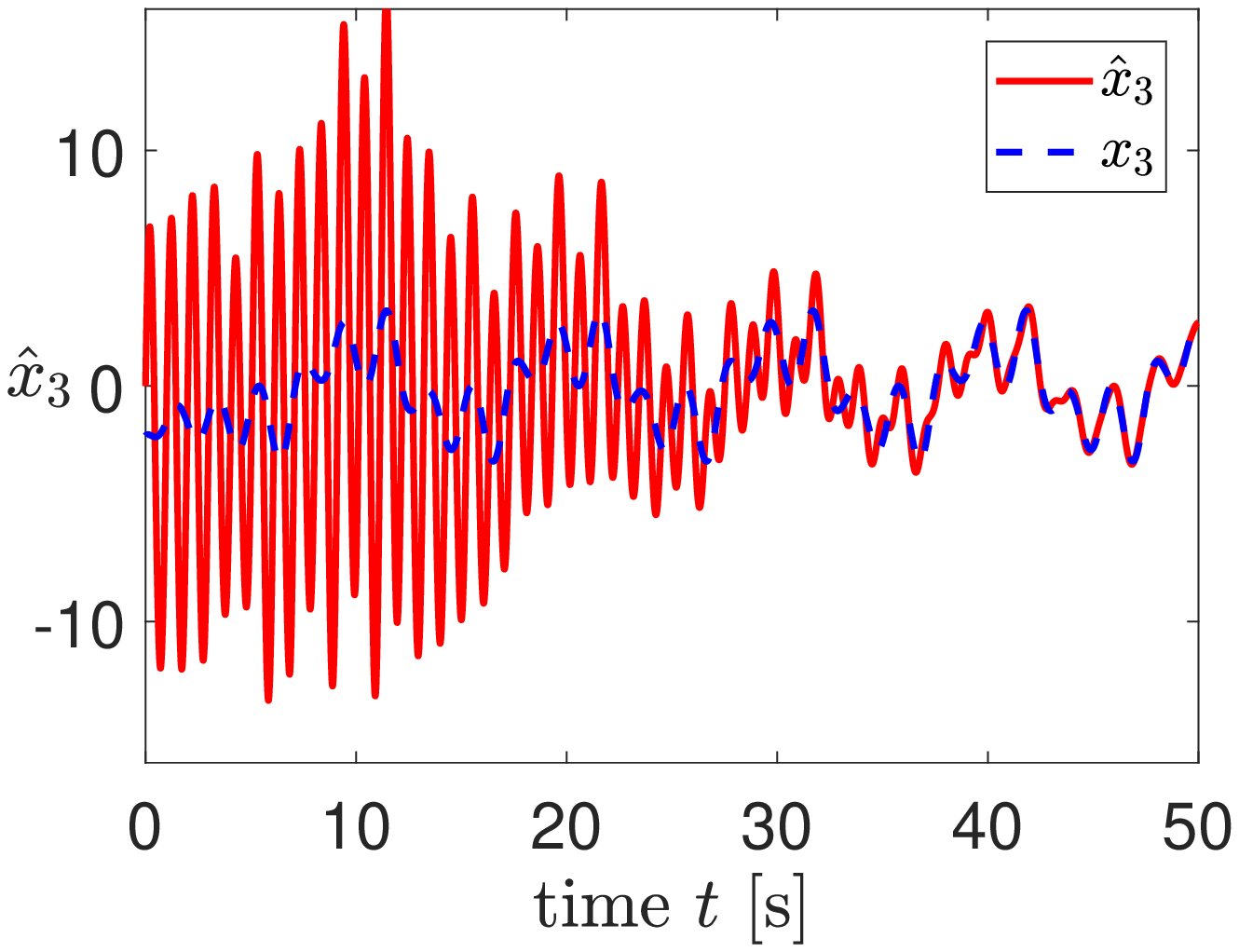}}
    \caption{Parameter and state estimation using the proposed adaptive observer.}
    \label{fig: DREM}
\end{figure}

Finally, we show the robustness of the proposed adaptive observer in the presence of disturbances and noise. The filter parameters $\{K_j\}_{j=1}^5$, the scaling factors $\gamma_i$, as well as the initial value of the observer $\hat{x}$ are specified to be the same as those in the ideal case. The disturbance term in (\ref{eqn: plant_ss_corrupt_1}) is $\omega=[0.05\sin{7 t}, 0.005 \sin{5 t}, 0.1 \sin{13 t}]$. $v$ in (\ref{eqn: plant_ss_corrupt_3}) is generated as random numbers with $|v|\leq v_0=0.1$. In the simulation, the true switching signal $\sigma(t)$ of the plant is used for the switching of the parameter estimator and the state observer.
%\textcolor{red}{more results to show?}
\begin{figure}[h]
\centering
    \subcaptionbox{\label{fig: theta_DREM_robust}}{\includegraphics[width=0.23\textwidth]{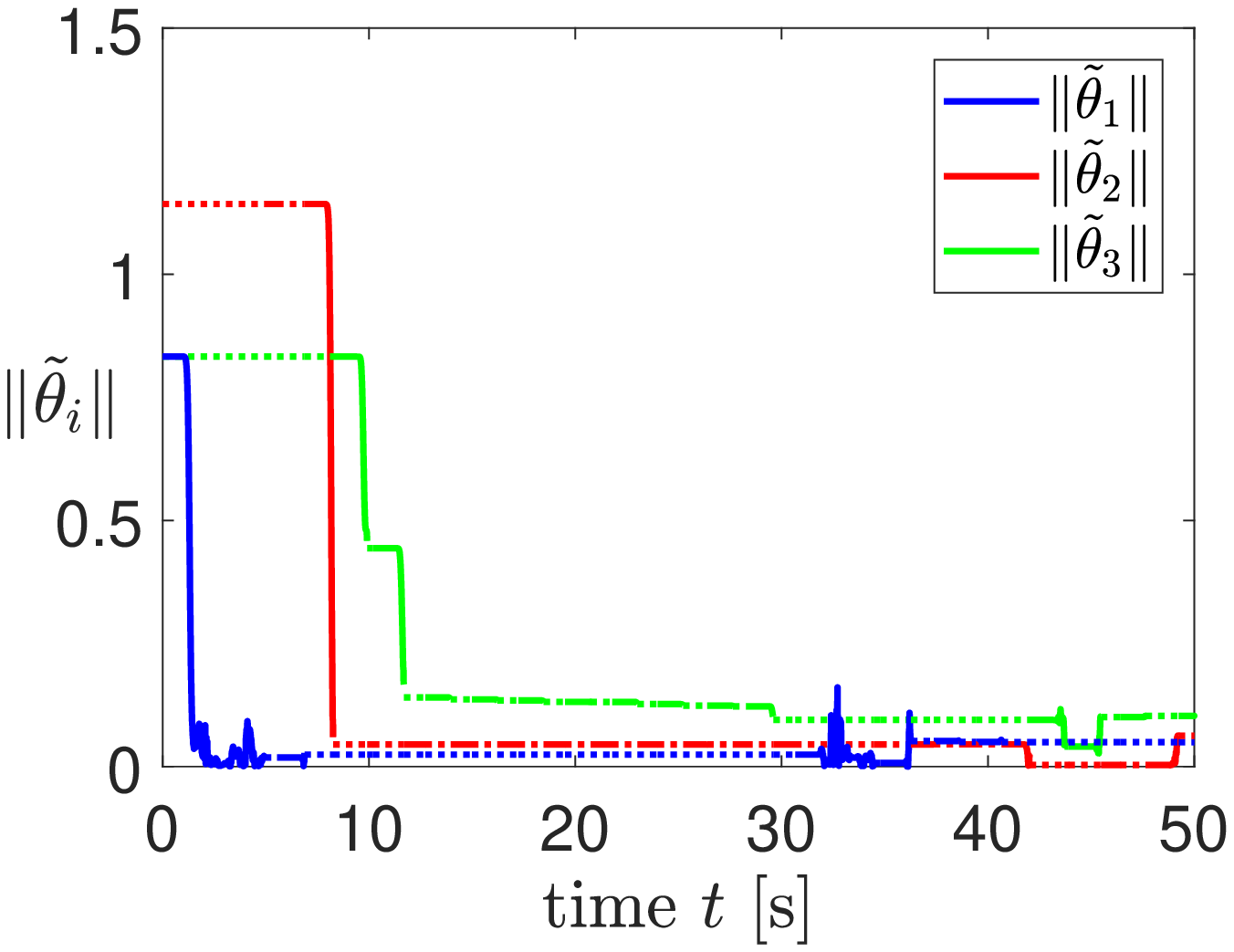}}
    \subcaptionbox{\label{fig: x_DREM_robust}}{\includegraphics[width=0.23\textwidth]{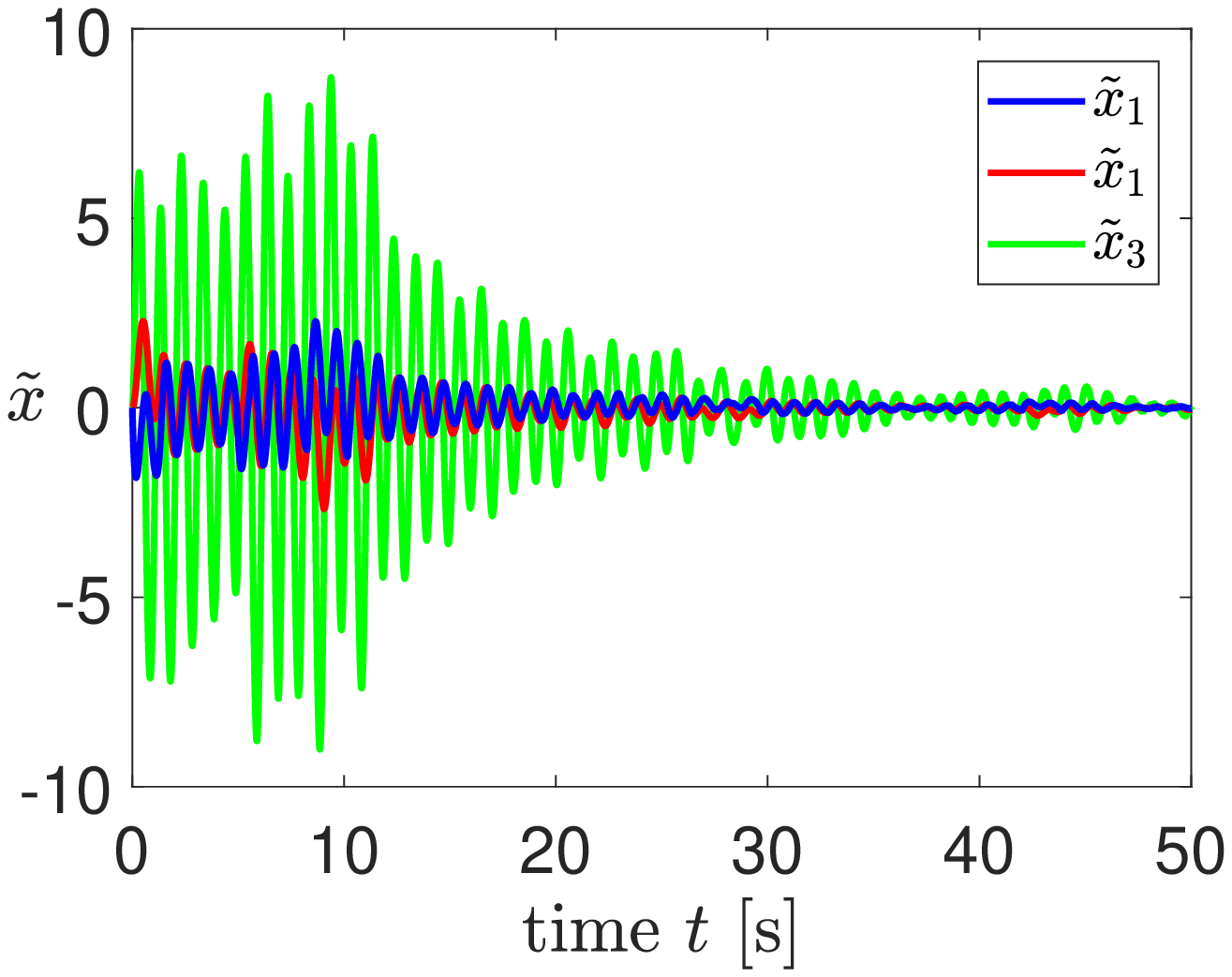}}
    \caption{Parameter and state estimation using the proposed adaptive observer in the presence of disturbances and noise.}
    \label{fig: DREM_robust}
\end{figure}

The parameter estimation errors $\|\tilde{\theta}_i\|$ and the state estimation error $\tilde{x}$ are shown in Fig. \ref{fig: theta_DREM_robust} and Fig. \ref{fig: x_DREM_robust}, respectively. Despite of the presence of disturbances and noise, both the parameter estimation errors and the state estimation error converge to bounded sets, which implies the robustness of the proposed adaptive observer. 

\section{Conclusion}
\label{sec: con}
We have explored the adaptive observer design for a class of systems with switched unknown parameters. Instrumental for this task is the derivation of a new LRE, which takes the intermittently appeared zero-input responses into account. We underscore the novelty that we convert the known information of zero-input responses into a part of an augmented LRE and propose a DREM-based parameter estimator to decouple the parameter adaptation from the SASI of these responses. With the proposed adaptive observer, we managed to eliminate the disturbance effect of the zero-input responses and achieve asymptotic convergence of state and parameter estimation errors. Moreover, we have proved the robustness of the proposed method by showing that the state and parameter estimation errors converge to a bounded set in the presence of disturbances and noise. Future work may include the extension of the proposed adaptive observer to nonlinearly parameterized switched systems.

% if have a single appendix:
%\appendix[Proof of the Zonklar Equations]
% or
%\appendix  % for no appendix heading
% do not use \section anymore after \appendix, only \section*
% is possibly needed

% use appendices with more than one appendix
% then use \section to start each appendix
% you must declare a \section before using any
% \subsection or using \label (\appendices by itself
% starts a section numbered zero.)
%

\appendices
\section{Proof of Theorem {\ref{thm: robust}}}
\label{sec: apd_proof}
\begin{proof}
We start the proof by showing that $d_j, j \in \{1,2,\cdots, n+m\}$ in (\ref{eqn: d_j}) is bounded for $t\in[t_0, \infty)$. As $\Phi_j(t,\tau)$ is the state-transition matrix of the Hurwitz matrix $(A- K_j C)$, there exist constants $\beta_j, \beta_j^{\prime} \in \mathbb{R}^+$ such that $\|\Phi_j(t,\tau)\| \leq \beta_j^{\prime} \textrm{e}^{-\beta_j(t -\tau)}$. Therefore, for $t \in [t_k, t_{k+1})$, we have from (\ref{eqn: d_j}) the following inequality
\begin{align*}
    \begin{split}
        &|d_j| \leq \beta_j^{\prime} \|C\| \int_{t_k}^t\textrm{e}^{-\beta_j(t -\tau)}\|\Psi(y,u)-\Psi(\bar{y},u)\| \|\theta_{\sigma(t)}^*\|\textrm{d}\tau\\
    &+\beta_j^{\prime}\|C\|\int_{t_k}^t \textrm{e}^{-\beta_j(t -\tau)} (\|\omega\|+\|K_j\| |v|) \textrm{d}\tau + |v|.
    \end{split}
\end{align*}
Let $L_{\theta}=\max_{i} \|\theta_i\|$ and $\delta t_k=t_{k+1}-t_k, k\in \mathbb{N}$. Due to Assumption \ref{asp: Lipschitz} we obtain for $t \in [t_k, t_{k+1}), k \in \mathbb{N}$
\begin{align*}
    \begin{split}
        |d_j| &\leq \beta_j^{\prime} \|C\| \int_{t_k}^t\textrm{e}^{-\beta_j(t -\tau)} (L_{\Psi} L_{\theta} v_0 +\omega_0+\|K_j\|v_0) \textrm{d}\tau + v_0\\
        &\leq  \frac{\beta_j^{\prime}}{\beta_j} \|C\| (L_{\Psi} L_{\theta} v_0 +\omega_0+\|K_j\|v_0) (1-\textrm{e}^{-\beta_j \delta t_k})+v_0\\
        &< \frac{\beta_j^{\prime}}{\beta_j} \|C\| (L_{\Psi} L_{\theta} v_0 +\omega_0+\|K_j\|v_0)+v_0,
    \end{split}
\end{align*}
which together with $\textrm{adj}(N^T) \in \mathcal{L}_{\infty}$ leads to $D, \bar{D} \in \mathcal{L}_{\infty}$.

From (\ref{eqn: parameter_error_vector_corrupt}), we have
\begin{equation*}
    \tilde{\theta}_{i}(t)=\textrm{e}^{-\gamma_i \int_{t_0}^t \chi_i \Delta^2 (s) \textrm{d}s}\tilde{\theta}_{i}(t_0) + \gamma_i \int_{t_0}^t \textrm{e}^{-\gamma_i \int_{\tau}^t \chi_i \Delta^2(s) \textrm{d}s} \chi_i \Delta D \textrm{d}\tau.
\end{equation*}

Since $\chi_i \Delta$ is PE, there exist constants $\alpha_i^\prime, \alpha_i \in \mathbb{R}^+$ such that
\begin{align}
    \begin{split}
        \|\tilde{\theta}_{i}(t)\| \leq &\alpha_i^\prime \textrm{e}^{-\gamma_i \alpha_i (t-t_0)} \|\tilde{\theta}_{i}(t_0)\|\\
        &+\alpha_i^\prime \gamma_i \int_{t_0}^t \textrm{e}^{-\gamma_i \alpha_i (t-t_0)} \|\chi_i\Delta D\| \textrm{d}\tau,
    \end{split}
\end{align}
which further leads to
\begin{equation}
\label{eqn: theta_bound}
    \lim_{t \to \infty} \sup_{\tau \geq t} \|\tilde{\theta}_{i}(\tau)\| \leq \frac{\alpha_i^\prime}{\alpha_i} \sup_{t} \|\chi_i(t) \Delta(t) D(t)\|.
\end{equation}
From (\ref{eqn: plant_ss_corrupt}) and (\ref{eqn: adaptive observer_corrupt}) we obtain
\begin{align*}
    \begin{split}
        \dot{\tilde{x}}=(A-KC)\tilde{x}+(\Psi(\bar{y},u)\hat{\theta}_{\sigma(t)}-\Psi(y,u)\theta_{\sigma(t)}^*)-w+Kv,
    \end{split}
\end{align*}
which can be further rearranged as 
\begin{align*}
    \begin{split}
        \dot{\tilde{x}}=(A-KC)\tilde{x}&+(\Psi(\bar{y},u)-\Psi(y,u))\hat{\theta}_{\sigma(t)}\\
        &+\Psi(y,u)\tilde{\theta}_{\sigma(t)}-w+Kv.
    \end{split}
\end{align*}
Recalling that $(A-KC)$ is Hurwitz, there exist constants $\beta, \beta^{\prime} \in \mathbb{R}^+$ such that 
\iffalse
\begin{align}
    \begin{split}
        \|\tilde{x}(t)\| \leq \beta^{\prime} \textrm{e}^{-\beta(t-t_0)}\|\tilde{x}(t_0)\|+\beta^{\prime}\int_{t_0}^t \textrm{e}^{-\beta (t-\tau)} \|\Psi(\bar{y},u)-\Psi(y,u)\|\|\hat{\theta}_{\sigma(t)}\|+\sup_t \|\Psi(y,u)\| \|\tilde{\theta}_{\sigma(t)}\| + \|\omega\| + \|K\| |v| \textrm{d}\tau
    \end{split}
\end{align}
\fi
\begin{align*}
    \begin{split}
        \|\tilde{x}(t)\|& \leq \beta^{\prime} \textrm{e}^{-\beta(t-t_0)}\|\tilde{x}(t_0)\|+\beta^{\prime}\int_{t_0}^t \textrm{e}^{-\beta (t-\tau)} (L_\Psi v_0 \|\hat{\theta}_{\sigma(t)}\|\\
        &+\sup_t \|\Psi(y,u)\| \|\tilde{\theta}_{\sigma(t)}\| + w_0 + \|K\| v_0) \textrm{d}\tau
    \end{split}
\end{align*}
Since $\|\hat{\theta}_{\sigma(t)}\|\leq \|\tilde{\theta}_{\sigma(t)}\|+\|\theta_{\sigma(t)}^*\|\leq \|\tilde{\theta}_{\sigma(t)}\|+ L_{\theta}$, we obtain
\begin{align*}
    \begin{split}
        \|\tilde{x}(t)\| &\leq \beta^{\prime} \textrm{e}^{-\beta(t-t_0)}\|\tilde{x}(t_0)\|+\beta^{\prime}\int_{t_0}^t \textrm{e}^{-\beta (t-\tau)} (L_\Psi v_0 \|\tilde{\theta}_{\sigma(t)}\|\\
        &+L_\Psi L_{\theta}v_0+\sup_t \|\Psi(y,u)\| \|\tilde{\theta}_{\sigma(t)}\| + w_0 + \|K\| v_0 )\textrm{d}\tau.
    \end{split}
\end{align*}
This further yields
\begin{align*}
    \begin{split}
        \lim_{t\to \infty} \sup_{\tau \geq t}\|\tilde{x}(\tau)\| \leq &\frac{\beta^{\prime}}{\beta} ((L_\Psi v_0 +\sup_t \|\Psi(y,u)\|) \|\tilde{\theta}_{\sigma(t)}\|\\
        &+L_\Psi L_{\theta}v_0 + w_0 + \|K\| v_0 ),
    \end{split}
\end{align*}
which together with (\ref{eqn: theta_bound}) gives (\ref{eqn: robust_bound}) with
\begin{align}
    \begin{split}
        \mu&=\max_i\frac{\alpha_i^\prime}{\alpha_i}(1+\frac{\beta^\prime}{\beta} (L_\Psi v_0 +\sup_t \|\Psi(y,u)\|)),\\
        c&=\frac{\beta^\prime}{\beta}(L_\Psi L_{\theta}v_0 + w_0 + \|K\| v_0 ).
    \end{split}
\end{align}
This completes the proof.
\end{proof}

% Can use something like this to put references on a page
% by themselves when using endfloat and the captionsoff option.
\ifCLASSOPTIONcaptionsoff
  \newpage
\fi

\bibliographystyle{unsrt}
\bibliography{ref}

\end{document}